\newcommand{\True}{\textbf{true}}
\newcommand{\False}{\textbf{false}}
\renewcommand{\epsilon}{\varepsilon}
\renewcommand{\dh}{\ensuremath{\mathcal{D(H)}}}
\newcommand{\doo}{\ensuremath{\mathcal{D(O)}}}
\DeclareMathOperator{\id}{id}
\renewcommand{\c}{\ensuremath{\mathcal{C}}}
\renewcommand{\u}{\ensuremath{\mathcal{U}}}
\newcommand{\e}{\ensuremath{\mathcal{E}}}
\newcommand{\h}{\ensuremath{\mathcal{H}}}
\renewcommand{\i}{\ensuremath{\mathcal{I}}}
 \renewcommand{\r}{\ensuremath{\mathcal{R}}}
\renewcommand{\o}{\ensuremath{\mathcal{O}}}
\newcommand{\m}{\ensuremath{\mathcal{M}}}
\newcommand{\mm}{\ensuremath{\mathbb{M}}}
\renewcommand{\j}{\ensuremath{\mathcal{J}}}
\renewcommand{\l}{\ensuremath{\mathcal{L}}}
\newcommand{\pr}{\ensuremath{\text{Pr}}}
 \newcommand{\tr}{{\rm tr}} %trace
 \renewcommand{\a}{\ensuremath{\mathcal{A}}}
\newcommand{\bra}[1]{\langle #1 |}
\newcommand{\ket}[1]{| #1 \rangle}
\newcommand{\braket}[2]{\langle #1 | #2 \rangle}
\newcommand{\ketbra}[2]{| #1 \rangle \langle #2 |}
\newcommand{\jg}[1]{\color{red} {JG: #1 :JG} \color{black} }
\newcommand{\tth}{\texttt{h}}
\newcommand{\tts}{\texttt{s}}
\newcommand{\ttm}{\texttt{m}}
\begin{document}
\title{Verifying Fairness in Quantum Machine Learning}
%
%\titlerunning{Abbreviated paper title}
% If the paper title is too long for the running head, you can set
% an abbreviated paper title here
%

\author{Ji Guan\inst{1} \and
Wang Fang\inst{1,2} \and
Mingsheng Ying\inst{1,3}}
\authorrunning{J. Guan et al.}
% First names are abbreviated in the running head.
% If there are more than two authors, 'et al.' is used.
%
\institute{State Key Laboratory of Computer Science, Institute of Software, Chinese Academy of Sciences, Beijing 100190, China \and
University of Chinese Academy of Sciences, Beijing 100049, China \and 
Department of Computer Science and Technology, Tsinghua University, Beijing 100084, China}

\maketitle              % typeset the header of the contribution
\pagestyle{plain}

\begin{abstract}
Due to the beyond-classical capability of quantum computing, quantum machine learning is applied independently or embedded in classical models for decision making, especially in the field of finance. Fairness and other ethical issues are often one of the main concerns in decision making. In this work, we define a formal framework for the fairness verification and analysis of quantum machine learning decision models, where we adopt one of the most popular notions of fairness in the literature based on the intuition --- any two similar individuals must be treated similarly and are thus unbiased. We show that quantum noise can improve fairness and 
develop an algorithm to check whether a (noisy) quantum machine learning model is fair. In particular, this algorithm can find bias kernels of quantum data (encoding individuals) during checking. These bias kernels generate infinitely many bias pairs for investigating the unfairness of the model. Our algorithm is designed based on a highly efficient data structure --- Tensor Networks --- and implemented on Google's TensorFlow Quantum. The utility and effectiveness of our algorithm are confirmed by the experimental results, including income prediction and credit scoring on real-world data, for a class of random (noisy) quantum decision models with 27 qubits ($2^{27}$-dimensional state space) tripling ($2^{18}$ times more than) that of the state-of-the-art algorithms for verifying quantum machine learning models.
\keywords{Quantum Machine Learning \and Fairness Verification \and Quantum Noise \and Quantum Decision Model.}
\end{abstract}
\section{Introduction}

\textbf{Quantum Machine Learning}: Google's quantum supremacy (or advantage) experiment demonstrated that a quantum computer \textit{Sycamore} with 53 noisy superconducting qubits can do a specific calculation, namely sampling, in 200 seconds that would take (arguably) 10,000 years on the largest classical computer using existing algorithms~\cite{arute2019quantum}.  
More recently, a quantum computer \textit{Jiuzhang} with 76 noisy photonic qubits was used to perform a type of Boson sampling 
in 20 seconds that would require 600 million years for a classical computer~\cite{zhong2020quantum}. These experiments mark the beginning of the Noisy Intermediate-Scale Quantum (NISQ) computing era, where quantum computers with tens-to-hundreds of qubits become a reality, but quantum noise still cannot be avoided.

Quantum machine learning is believed to be a far frontrunner in setting a path for practical beyond-classical applications of NISQ quantum devices. This stimulates the fast development of various quantum machine learning (see~\cite{biamonte2017quantum} for a review). Stepping into industries,  Google recently built up a framework \textit{TensorFlow Quantum} for the design and training of quantum machine learning within its well-known classical machine learning platform --- \textit{TensorFlow}~\cite{broughton2020tensorflow}.

Classical machine learning has led to automated decision models assuming a signiﬁcant role in making real-world decisions, especially in finance~\cite{dixon2020machine}. Such (financial) decision tasks are known to face the curse of dimensionality as there are too many features available to model customers/users. Principal component analysis (PCA) is one of the most popular methods for dimensionality reduction. It was recently shown that quantum PCA algorithm~\cite{lloyd2014quantum} can run exponentially faster on a quantum processor. At the same time, the training process of quantum machine learning could be sped up exponentially (compared with classical training) by using quantum PCA to implement iterative gradient descent methods for network training~\cite{rebentrost2019quantum}. It is worth noting that this quantum approach is generic in the sense that it can be applied to various types of neural networks, including shallow, convolutional, and recurrent networks, and thus can mitigate the high complexity issue of classical training. Because of these reasons, quantum machine learning has been introduced to be applied independently or embedded in classical decision-making models, e.g. fraud detection (in transaction monitoring)~\cite{liu2018quantum,di2021quantum}, credit assessments (risk scoring for customers)~\cite{unknown,milne2017optimal}, and recommendation systems for content dissemination~\cite{kerenidis2016quantum} (see reviews \cite{egger2020quantum,orus2019quantum} for more information). Similar to the classical counterparts, the quantum models are trained on individuals' information, e.g. saving, employment, salary (encoded as quantum data).

\textbf{Fairness in Machine Learning}: It is well-known that classical decision models are prone to discriminating against users/consumers on the basis of characteristics such as race and gender~\cite{flores2016false}, and have even led to legal mandates of ensuring \emph{fairness}. To develop fair models, various attempts have been made to precisely define and quantify fairness. They broadly fall into two categories: \textit{group} and \textit{individual} fairness. Group fairness aims to achieve through statistical parity the same outcomes across different protected groups (e.g. gender or race)~\cite{calders2009building,pedreshi2008discrimination}, whereas individual fairness advocates treating similar individuals similarly (receiving the similar outcomes)~\cite{dwork2012fairness} (see~\cite{barocas-hardt-narayanan,binns2020apparent} for various definitions of fairness and discussions about their relationship).  
%As we can see, a fair model with respect to group fairness appears to be unfair from the individual’s viewpoint in the case that pairs of individuals who are otherwise similar but differ in a protected characteristic are assigned the same outcomes~\cite{binns2020apparent}. For more details of various fairness definitions, please see~\cite{barocas-hardt-narayanan}. 
The computer science community has endeavoured to check and avoid bias in classical decision models in the sense of different types of fairness~(e.g.~\cite{dwork2012fairness,barocas-hardt-narayanan,john2020verifying}). In particular, several verifiers for formal analysis and fairness verification have been designed and implemented, including FairSquare \cite{albarghouthi2017fairsquare}, VeriFair~\cite{bastani2019probabilistic} and  Justicia~\cite{ghosh2020justicia}.

Inevitably, the same issue of fairness arises in the quantum models too. Furthermore, as quantum machine learning is principled by quantum mechanics, which is usually hard to explain to the end-users, it is even more important to verify fairness when a decision is made by a quantum machine learning algorithm. However, to the best of our knowledge, the verification problem of fairness in quantum algorithms has not yet been touched. 

%excluding that the definition of quantum fair machine learning was discussed related to Grover's search algorithm and none of the theoretical results was reported~\cite{perrier2021quantum}. 

\textbf{Contributions of This Paper}: In this work, we define a formal framework so that the fairness of quantum machine learning decision models can be verified and analyzed in a principled way. Our \textit{design decision} is as follows: we focus on  individual fairness --- \emph{treating similar individuals similarly}~\cite{dwork2012fairness}. The trace distance --- one of the most widely used quantities in quantum information \cite[Section 9.2]{nielsen2010quantum} --- is chosen as the metric for measuring the similarity of quantum data (individuals) in defining fairness. Our main technical contributions include: 
\begin{enumerate}\item[(1)] \textit{\textbf{Problem Reduction}}: We prove that for a given (noisy) quantum decision model, checking the fairness can be reduced to a variant of distinguishing quantum measurements (states), a fundamental problem in quantum information. We resolve this specific variant problem by finding the maximum difference between the eigenvalues of the matrices generated by quantum measurements.
%A quantum measurement is the only way (observer) to extract classical information from quantum data. We resolve the latter problem by linear algebra analysis and 
As a corollary, we show that quantum noise can improve fairness.
\item[(2)] \textit{\textbf{Algorithm}}: Based on (1), an algorithm is developed to exactly and efficiently check whether or not a quantum machine learning decision model is fair. A special strength of this algorithm is that it can identify \emph{bias kernels} during the checking, and these kernels generate infinitely many \emph{bias pairs}, that is, two similar quantum data that are not treated similarly. Then these bias pairs can be used to investigate the bias of the decision model.
%A bias pair consists of.  
\item[(3)] \textit{\textbf{Case Studies}}: The effectiveness of our algorithm is confirmed by experiments on quantum (noisy) decision models with 8 or 9 quantum bits (qubits) for income prediction and credit scoring on real-world data. In particular, its efficiency is shown by a class of random quantum decision models with 27 qubits, which works on a  $2^{27}$-dimensional state space.
The state-of-the-art verification algorithm~\cite{guan2021robustness} for quantum machine  learning was only able to deal with (the robustness with) 9 qubits. Our experiments can be considered a big step toward the demanded number ($\geq$ 50) of qubits in practical applications of the NISQ era. 
\end{enumerate}

%In summary, the main technical contributions of the paper are as follows.
%\begin{itemize}
%	\item \emph{Computing the fairness} of a (noisy) quantum machine learning decision model is reduced to calculating the distinguishability of a quantum measurement, and quantum noise can improve the fairness;
%	\item \emph{An efficient algorithm} to check fairness  for a given quantum machine learning decision model and detect bias (kernel) pairs is developed;
	%\item The \emph{implementation} of the fairness  verification algorithm on Google’s TensorFlow Quantum;
%	\item \emph{Case studies} – Checking the fairness of quantum machine learning (noisy) decision models for fairness-crucial fraud detection and credit scoring in finance, and a class of random decision models with the order of magnitude 30 qubits.
%\end{itemize}

\subsection{Related Works and Challenges}\label{sec:related_works}

To put our work in an appropriate context, let us further discuss some related works and the challenges we face in this paper. 

{\vskip 3pt}

\textbf{Classical versus Quantum Models:} 
%We capture fairness by the individual principle---any two data encoding similar individuals should be treated similarly~\cite{dwork2012fairness}. 
In order to identify and mitigate the bias of classical machine learning decision models, 
%fairness verification algorithms have been proposed. For example, 
an algorithm for maximizing utility with fairness guarantee was proposed~\cite{dwork2012fairness}. Then the strategy of searching input data with linear and integral constraints is employed in a verifier for proving individual fairness of a given decision model~\cite{john2020verifying}. The verifier is sound but not complete in general. But in the case of linear models, it is exact (both sound and complete) if the worst-case exponential time is allowed. However, although quantum decision models are always linear, the above technique cannot be directly generalized from the classical case to the quantum case. The main obstacle here is that the corresponding constraints in the quantum models are nonlinear, and thus searching the data set in a linear domain is ineffective in the quantum case. In this paper, we surmount this obstacle by reducing the quantum fairness verification problem to determining the distinguishability of a quantum measurement, which is independent of input data. Then we resolve the latter by eigenvalue analysis with polynomial time in the dimension of input quantum data. As a result, our algorithm is exact (sound and complete) and efficient.

{\vskip 3pt}

\textbf{Fairness versus Robustness:} As in the classical case, the individual fairness considered in this paper can be thought of as a kind of global robustness~\cite{john2020verifying}. This will be formally discussed in Section~\ref{Sec:definition}. In the last few years, quite a few papers have been devoted to (adversarial) robustness verification of quantum machine learning (e.g.~\cite{guan2021robustness,liu2020vulnerability,weber2020optimal}), where a verifier is given a nominal input quantum datum and it checks robustness in a  neighborhood of that particular input datum. However, the techniques developed in these works cannot be directly generalized to solve our problem of fairness verification, because we are required to check a global property. Instead, we transfer the impact of the evolution of the quantum machine learning model on input quantum data to quantum measurements.
%and conclude that fairness intrinsically relies on the way we read out classical information by measurements from the output quantum data of the models.

%(i.e. the outputs change small for perturbations of any input data). This means that existing approaches to local robustness verification are not directly applicable to our problem. 

%This permits us to evaluate quantum decision models even when fairness is deﬁned based on quantum data that simply is not available to the model.

{\vskip 3pt}

\textbf{Efficiency:} 
As the dimension of input data increases exponentially with the number of qubits, efficiency is always a key issue in the verification of quantum machine learning models. The state-of-the-art algorithms for robustness verification mentioned above can only cope with quantum machine learning models with 9 qubits\footnote{The experiments of~\cite{guan2021robustness} were performed on a personal computer and the size is at most 8 qubits. We have estimated and tested the same experiments on the server we used in this paper and only 9 qubits can be handled.}. In this paper, we boost the scale up to 27 qubits on a small server, which represents a big step toward the demand in practical applications of NISQ devices ($\geq$ 50 qubits). The speedup originates from not only the high efficiency of our algorithm but also the based data structure we adopted --- \emph{Tensor Network}~\cite{biamonte2017tensor} --- which can exploit the locality and regularity of the underlying circuits of quantum decision models and thus further optimize the algorithm. 

%Even though, we study fairness in terms of the quantum decision model, our results can also be applied to global robustness analysis of any quantum machine learning models as we do not restrict any condition on the models. 

\section{Quantum Decision Models}\label{sec:preliminary}
For convenience of the reader, in this section, we review the setup of quantum (machine learning) decision models in their most basic form. 

{\vskip 3pt}

{\bf Classical Models:} In the classical world, a  \emph{classification decision model} is a mapping $f_c:\c\rightarrow \o$, where $\c$ is a set of data to be classified,  and $\o$ is a set of outcomes corresponding to the classes we are interested in; for example $\o=\{0,1\}$ in the simplest non-trivial (binary) case. Such a  model $f_c$ can be generalized to be a randomized mapping $f_r:\c\rightarrow \doo$, where $\doo$ denotes the set of probability distributions over $\o$.  $f_r$ is known as a \emph{regression decision model} to predict distributions and naturally describes a randomized classification procedure: to classify $x\in \c$, choose an outcome $o\in\o$ according to the distribution $f_r(x)$. For example, $o$ is chosen as  the outcome corresponding to the maximum probability of $f_r(x)$. Therefore, the basic form of a classical decision model is a randomized mapping $f=f_r$ ($f=f_c$ when $f$ is degenerated to be a deterministic mapping).

\vspace{-12pt}

\begin{figure}[ht]
    \centering
    \input{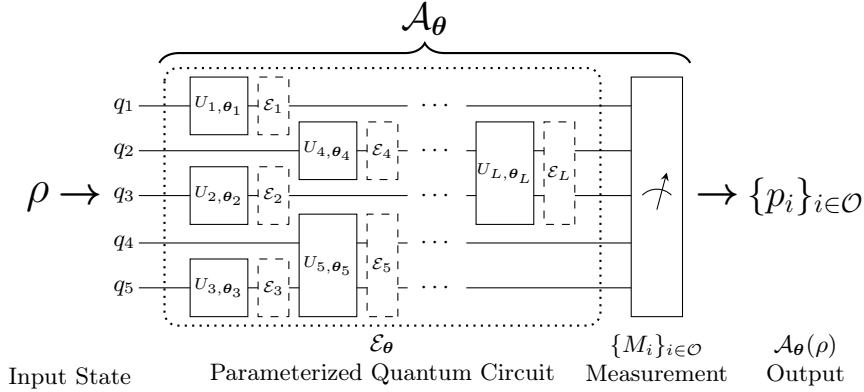}
    \vspace{-0.2cm}
    \caption{Noisy Quantum (Machine Learning) Decision Model}
    \label{fig:classification}
    \vspace{-0.4cm}
\end{figure}

{\bf Quantum Models:}  Due to the statistical  nature of quantum mechanics, a quantum decision model is inherently a randomized mapping $\a: \dh\rightarrow \doo$. Here $\dh$ is the set of \emph{quantum states} (data) and to be specific later. Inspired by the classical models, $\a$ is not predefined but initialized as $\a_\theta$ by a parameterized quantum circuit $\e_\theta$ (see Fig.~\ref{fig:classification}) with a set of free parameters $\theta=\{\theta_{j}\}_{j=1}^{L}$. Following the training strategy of classical machine learning, $\a_\theta$ is trained on a set of input quantum states (training dataset) by tuning $\theta$ subject to some loss function $\l(\theta)$.

In the following, we explain the noisy quantum decision model from the left side to the right one of Fig.~\ref{fig:classification}. For the details of the training process, we refer to a comprehensive review paper~\cite{cerezo2021variational}.

{\vskip 3pt}

\textbf{Input State $\bm{\rho}$:} The input state of the model is a quantum \emph{mixed state} $\rho$, which is mathematically modelled by a positive semi-definite complex matrix, written as $\rho\geq 0$, with unit trace\footnote{$\rho$ has unit trace if $\tr(\rho) = 1$, where trace $\tr(\rho)$ of $\rho$ is defined as the summation of diagonal elements of $\rho$.}. $\rho$ admits a decomposition form\footnote{This kind of decomposition is generally infinitely many, and one instance is eigen-decomposition, i.e., $p_k$ and $\ket{\psi_k}$ are eigenvalues and  eigenvectors of $\rho$, respectively}:
$\rho = \sum_{k} p_{k}\psi_k$
where $\{p_{k}\}$ is a probability distribution and each $\psi_k$ is a rank-one positive semi-definite matrix, i.e., $\psi_k=\ket{\psi_k}\bra{\psi_k}$. Here, $\ket{\psi_k}$ is a unit vector and $\bra{\psi_{k}}$ is the entry-wise conjugate  transpose of $\ket{\psi_{k}}$, i.e., $\bra{\psi_{k}}=\ket{\psi_{k}}^\dagger$. Physically, $|\psi_k\rangle$ represents a \textit{pure state}, and $\rho$ represents an ensemble $\{(p_{k},\ket{\psi_{k}})\}_{k}$, often called a mixed state, meaning that $\rho$ is at $\ket{\psi_{k}}$ with probability $p_{k}$. In particular, if $\rho=\psi$ for some pure state $\ket{\psi}$, then the ensemble is deterministic; that is, it is degenerated to a singleton  $\{(1,\psi)\}$. %$\ket{\cdot}$ and $\bra{\cdot}$ are Dirac notations, commonly used in the quantum computing community, for representing normalized vectors. 
In general, the statistical feature of $\rho$ may result from  quantum noise, which is unavoidable in the current NISQ era, from the surrounding environment.
\begin{example} [Qubits -- Quantum Bits] A pure state of a single qubit $q$
is described by a 
%In closed quantum systems (without noise), quantum states are always pure states, and consist of $n$ the most basic pure states---\emph{quantum bits (qubits)} in a 
2-dimensional unit vector and in the Dirac notation it can be written as: 
%Hilbert (linear) space with  orthonormal basis $\{\ket{0}$, $\ket{1}\}$, denoted as $q=\ket{q}\bra{q}$, where 
\[\ket{\psi}=\left(\begin{array}{cc}a\\ b\end{array}\right)=a\ket{0}+b\ket{1}  \text{ for }\ket{0}=\left(\begin{array}{cc}1\\ 0\end{array}\right), \ket{1}=\left(\begin{array}{cc}0\\ 1\end{array}\right) \text{ and } |a|^2+|b|^2=1,\]
and ensembles $\{(\frac{1}{2},\ket{0}),(\frac{1}{2},\ket{+})\}$ and $\{(\frac{1}{6},\ket{1}),(\frac{5}{6},\ket{\phi})\}$ of $q$  are represented by the same 2-by-2 mixed state \[\rho=\frac{1}{4}\begin{pmatrix}3&1\\ 1&1\end{pmatrix}=\frac{1}{2}\ketbra{0}{0}+\frac{1}{2}\ketbra{+}{+}=\frac{1}{6}\ketbra{1}{1}+\frac{5}{6}\ketbra{\phi}{\phi},\] where $\ket{+}=\frac{1}{\sqrt{2}}(\ket{0}+\ket{1})$ and $\ket{\phi} = \frac{1}{\sqrt{10}}(3\ket{0}+\ket{1})$.

%{\color{red}Add a mixed state of a qubit here!!!}

%\[\bra{q}=\left(a^*, b^*\right)=a^*\bra{0}+b^*\bra{1}  \text{ for }\bra{0}=\left(1, 0\right), \quad\bra{1}=\left(0, 1\right)\]
%where $a^*$ and $b^*$ are conjugates of complex numbers $a$ and $b$, respectively, and vector $\ket{q}$ is normalized, i.e., inner product $\braket{q}{q}=aa^*+bb^*=1$. Here, $\{\ket{0}$, $\ket{1}\}$ of the Hilbert space corresponds to the digital value set  $\{0,1\}$ of a bit  in classical computers. Subsequently, pure state $\psi$ is a tensor product of $n$ qubits on a $2^n$-dimensional Hilbert space $\h$, i.e., 
\end{example}

For a system of multiple qubits $q_1,...,q_n$, the state space is a $2^n$-dimensional Hilbert (linear) space, denoted by $\h$. As a result, pure and mixed states on $\h$ are $2^n$-dimensional unit vectors and $2^n\times 2^n$ positive semi-definite matrices with unit trace, respectively. It is worth noting that \emph{the dimension $2^n$ of the state space $\h$ of quantum states is exponentially increasing with the number $n$ of qubits}. Thus, describing a quantum system with a large number of qubits and verifying its properties on a classical computer is challenging. For our purpose of verifying fairness in quantum machine learning, we adopt a compact data structure --- \emph{Tensor Networks} --- to mitigate this issue (see this in Section~\ref{Sec:Evaluation}).

%it can be in a product state {\color{red}rewrite this part!!!} $\psi=q_1\otimes  q_2\otimes \ldots \otimes q_{n-1}\otimes q_{n}$ for qubits $\{q_i\}_{i=1}^n$. 

%It is worth noting that \emph{the dimension $2^n$ of the state space $\h$ of quantum states is exponentially increasing with the number of qubits $n$}. Thus storing a quantum state and finishing a quantum computational task on a classical computer is a challenging problem. 

{\vskip 3pt}

{\bf Parameterized Quantum Circuit $\bm{\e_\theta}$:} Several different types of parameterized quantum circuits have been proposed; 
%corresponding to classical counterparts, such as 
e.g. quantum neural networks (QNNs)~\cite{beer2020training} and quantum convolutional neural networks (QCNNs)~\cite{cong2019quantum}. 
%The $\e_\theta$ in Fig~\ref{fig:classification} is one of them. The basic form of
Basically, %$\e_\theta$ is mathematically a linear mapping: for any input state $\rho\in\dh$, we have an output state $\rho'\in\dh$ as follows:
%\[\rho'=\e_{\theta}(\rho),\]
$\e_{\theta}$ consists of a sequence of quantum operations: 
%(mapping composition) of quantum gates or noises, i.e., 
$\e_\theta=\e_{d,\theta_d}\circ\cdots \circ\e_{1,\theta_{1}}$. For each input quantum state $\rho$, the output of the circuit is $\e_\theta(\rho)=\e_{d,\theta_d}(\ldots \e_{2,\theta_2}(\e_{1,\theta_1}(\rho)))$. 
%Each $\e$ is a  
In the current NISQ era, 
%in prameterized quantum circuit $\e_{\theta}$, 
each component $\e_{i,\theta_i}$ is:\begin{itemize}\item either a parameterized quantum gate $\u_{i,\theta_i}$ (the full boxes in Fig~\ref{fig:classification}) with $\u_{i,\theta_i}(\rho)=U_{i,\theta_i}\rho U_{i,\theta_i}^\dag$, where $U_{i,\theta_i}$ is a unitary matrix with parameters $\theta_i$, i.e., $U_{i,\theta_i}^\dagger U_{i,\theta_i}=U_{i,\theta_i} U_{i,\theta_i}^\dagger=I$ (the identity matrix), and $U_{i,\theta_i}^\dagger$ is the entry-wise conjugate  transpose of $U_{i,\theta_i}$; \item or a quantum noise $\e_i$ (the dashed boxes in Fig~\ref{fig:classification}). Mathematically, it can be described by a family of Kraus matrices $\{E_{ij}\}$~\cite{nielsen2010quantum}:
$\e_i(\rho)=\sum_j E_{ij}\rho E_{ij}^\dagger$ with $\sum_{j}E^\dagger_{ij}E_{ij}=I$. Briefly, $\e_i$ is represented as $\e_i=\{E_{ij}\}$.
\end{itemize} Note that in constructing a quantum machine learning model, only quantum gate $\u_{i,\theta_i}$ is parameterized, and noises $\e_i$ are not because they come from the outside environment. %but affects the evolution of the model. On the other hand,   $\u_{i,\theta_i}=\{U_{i,\theta_i}\}$ is a \emph{unitary super-operator} which we can manipulate, where . 

It should be pointed out  that, in a practical model, as shown in Fig.~\ref{fig:classification}, each quantum operation  $\e={\e_{i,\theta_{i}}}$   non-trivially applies on one or two qubits. For example, if $\e$ only works on the first qubit, then $\e=\e_1\otimes\id_2\otimes\ldots\otimes \id_n$ % with identity super-operator $\id=\{I\}$, 
and  $\e(\rho_1\otimes \rho_2\otimes \ldots \otimes \rho_n)=\e_1(\rho_1)\otimes \rho_2\otimes \ldots \otimes \rho_n$, where $\rho_i$ is the mixed state applied on qubit $q_i$ and tensor product $\rho_1\otimes \rho_2\otimes \ldots \otimes \rho_n$ is the joint state of multiple qubits $q_1,\ldots,q_n$. This locality feature will be exploited by Tensor Networks to optimize our verification algorithm for fairness in the Evaluation Section --- Section~\ref{Sec:Evaluation}. 
\begin{example}\label{Exa:noise}
Consider the 1-qubit noise model:
 $\e_U(\rho)=(1-p)\rho+p U\rho U^\dagger$ 
 %, for all $\rho\in\dh$
 where $0\leq p\leq 1$ is a probability and  $U$ is a unitary matrix. 
 %Then $\e_U$ consists of Kraus matrices $\e_U=\{\sqrt{p}I, \sqrt{1-p}U\}$. Such $\e_U$ represents different 
 It includes the following typical noises depending on the choice of $U$: $U=X$ for bit flip, $U=Z$ for phase flip and $U=Y=\imath XZ$ for bit-phase flip~\cite[Section 8.3]{nielsen2010quantum}, where $I, X, Y, Z$ are \emph{the Pauli matrices}: 
 \[\setlength\abovedisplayskip{1.5pt}
 \setlength\belowdisplayskip{1.5pt} X=\left(\begin{matrix}0&1\\ 1&0\end{matrix}\right),\  Y=\left(\begin{matrix}0&-\imath\\ \imath&0\end{matrix}\right),\  Z=\left(\begin{matrix}1&0\\ 0&-1\end{matrix}\right),\ 
I=\left(\begin{matrix}1&0\\ 0&1\end{matrix}\right),\]
where $\imath$ denotes imaginary unit.
 %One more complicated noise combining the three noise is 
 The depolarizing noise combines the above three kinds of noise: %$\e_D=\{\sqrt{p}I,\sqrt{\frac{1-p}{3}}X,\sqrt{\frac{1-p}{3}}Y,\sqrt{\frac{1-p}{3}}Z\}$:
$\e_D(\rho)=(1-p)\rho+p\frac{I}{2}=(1-\frac{3p}{4})\rho+\frac{p}{4}(X\rho X+Y\rho Y+Z\rho Z$).

\end{example}

{\vskip 3pt}

{\bf Measurement $\bm{\{M_{i}\}_{i\in\o}}$:} At the end of parameterized quantum circuit $\e_\theta$, we cannot directly read out the output $\e_\theta(\rho)$.
%as the qubits involved in it is physically a particle at the microscopic scale (near or less than $10^{-9}$ meters). 
The only way allowed by quantum mechanics to extract classical information from $\e_\theta(\rho)$ is through a quantum measurement, which is mathematically modeled by a set $\{M_{i}\}_{i\in\o}$ of matrices %on its state (Hilbert)  space $\h$ 
with $\o$ being the set of possible outcomes and $\sum_{i
\in\o} M_{i}^\dagger M_{i}=I$. This observing process is probabilistic: for the measurement on  state $\e_\theta(\rho)$, an outcome $i\in\o$ is obtained with probability  
$p_{i}=\tr( M_{i}\e_\theta(\rho) M_{i}^\dagger)$\footnote{After measuring $\e_{\theta}(\rho)$ with outcome $i\in\o$, the state $\e_{\theta}(\rho)$ will be collapsed (changed) to  
$  \rho'_i={M_{i}\e_{\theta}(\rho) M_{i}^\dagger}/p_i.$
As we can see, the post-measurement  state $\rho'_i$ is dependent on the measurement outcome $i$.  
This special property is vitally different from the classical computation.}. Therefore, the output of quantum machine learning model $\a_{\theta}$ upon an input $\rho$ is a probability distribution $\a_{\theta}(\rho)=\{p_i: p_i=\tr( M_{i}\e_\theta(\rho) M_{i}^\dagger)\}$, as depicted at the rightmost  of Fig.~\ref{fig:classification}.

In this paper, we focus on the well-trained quantum machine learning models (i.e., $\theta$ has been tuned), so we ignore the $\theta$ in $\e_\theta$ and $\a_\theta$. Now, we can formally specify quantum decision model $\a$ as follows:
\begin{definition}\label{def:decision_model}
  A quantum decision model  $\a=(\e,\{M_i\}_{i\in \o})$ is a randomized mapping: 
  \[\a: \dh\rightarrow\doo\qquad \a(\rho)=\{\tr(M_i\e(\rho)M_i^\dagger)\}_{i\in\o}\quad \forall \rho\in\dh,\]
  where  $\e$ is a super-operator on Hilbert space $\h$, and $\{M_{i}\}_{i\in \o}$ is a quantum measurement on $\h$  with $\o$ being the set of measurement outcomes (classical information) we are interested in.
\end{definition}

Like their classical counterparts, quantum decision models are usually classified into two categories:   \textit{regression} and \textit{classification}  models. Regression models generally predict a value/quantity, whereas classification models predict a label/class. More specifically, 
a regression model $\a_{\r}$ uses the output of $\a$  directly as the predicted value of the regression variable $\rho\in\dh$. That is $\a_\r(\rho)=\a(\rho)$ for all $\rho\in\dh$.
In the classical world, regression models have been successfully applied to many real-world applications, such as stock market prediction and object detection.  Quantum regression models were recently used to predict molecular atomization energies~\cite{reddy2021hybrid} and the demonstration of IBM's programming platform---Qiskit~\cite[Variational Quantum Regression]{QiskitTextbook}.
%Regression algorithms are used to predict the future price of stocks based on certain past features like time of the day or festival time, etc. Stock Market Prediction also falls under a subdomain of study called Time Series Analysis. 
%Object Detection is the process of detection of the location of a given object in an image or video. This process returns the coordinates of the pixel values stating the location of the object in the image. These coordinates are determined by using regression algorithms alongside classification.
On the other hand, classification model $\a_{\c}$ further uses the measurement outcome probability distribution $\a(\rho)$ to sign a class label on the input state $\rho$. The most common  way is as follows:
\[\a_{\c}:\dh\rightarrow \o \qquad \a_\c(\rho)=\arg \max_i\a(\rho)_i \quad\forall \rho\in \dh, i\in\o,\]
where $\a(\rho)_i$ denotes the $i$-th element of distribution $\a(\rho)$.
Classical classification models have broad applications in our daily life, such as face recognition and medical image classification. Quantum classification models have been used to  implement quantum phase recognition~\cite{cong2019quantum} and cluster excitation detection~\cite{broughton2020tensorflow} from real-world physical problems, and fraud detection~\cite{liu2018quantum} in finance. 

As we saw above, although classical and quantum decision models $f$ and $\a$ are both randomized mappings, the input data to them and their procedure of processing the data are fundamentally different. These differences make that the techniques for verifying classical models cannot be directly applied to quantum models and we have to develop new techniques for the latter.

\section{Defining Fairness}\label{Sec:definition}
As discussed in the Introduction, an important issue in classical machine learning is: how fair is the decision made by machines. The same issue exists for quantum machine learning. 
Intuitively, the fairness of quantum decision model  $\a$ is to treat all input states equally, i.e., there is not a pair of two closed input states that has a large difference between their corresponding outcomes.  Formally, 

\begin{definition}[Bias Pair]\label{def:bias}
  Suppose we are given a quantum decision  model  $\a=(\e,\{M_i\}_{i\in\o})$, two  distance metrics $D(\cdot,\cdot)$ and $d(\cdot,\cdot)$ on $\dh$ and $\doo$, respectively, and two small enough threshold values $1\geq \epsilon,\delta>0$. Then $(\rho,\sigma)$ is said to be an $(\epsilon,\delta)$-bias pair  if the following is true
  \begin{equation}\label{def-fair-eq} [D(\rho,\sigma)\leq \epsilon]\land [d(\a(\rho),\a(\sigma))> \delta].\end{equation}
\end{definition}

The first condition in (\ref{def-fair-eq}) indicates that the distance between input states $\rho$ and $\sigma$ is within $\epsilon$, and the second condition shows the difference between outcomes $\a(\rho)$ and $\a(\sigma)$ is beyond $\delta$. Sometimes, without any ambiguity, $(\rho,\sigma)$ is called a bias pair if $\epsilon$ and $\delta$ are preset.
\begin{definition}[Fair Model]\label{def-fair}
  Let  $\a=(\e,\{M_i\}_{i\in\o})$ be a decision model. Then  $\a$ is $(\epsilon,\delta)$-fair if there is no any $(\epsilon,\delta)$-bias pair.
\end{definition}

The intuition behind this notion of fairness is that small or non-significant perturbation of a sample $\rho$ to $\sigma$ (i.e.  $D(\rho,\sigma )\leq \epsilon$) must not be treated ``differently'' by a fair model. The choice of input distance function $D(\cdot,\cdot)$ identifies the perturbations to be considered non-significantly, while the choice of the output distance function $d(\cdot, \cdot)$ limits the changes allowed to the perturbed outputs in the model.

{\vskip 3pt}
\textbf{Fairness Implying Robustness}: As the same in the classical situation~\cite{john2020verifying}, robustness of quantum machine learning is a special case of fairness defined above. Formally, robustness is defined on a specific state $\rho$: given a quantum model $\a=(\e,\{M_i\}_{i\in\o})$, $\rho$ is $(\epsilon,\delta)$-robust if for all $\sigma\in \dh$, $D(\rho,\sigma)\leq \epsilon$ implies $d(\a(\rho),\a(\sigma))\leq \delta$. In contrast, fairness is established on all quantum states: $\a$ is $(\epsilon,\delta)$-fair if and only if $\rho$ is $(\epsilon,\delta)$-robust for all states $\rho\in\dh$. So, \emph{fairness implies robustness and can be thought of as global robustness.}

{\vskip 3pt}
\textbf{Choice of Distances}: 
The reader should have noticed that the above definition of fairness for quantum decision models is similar to that for classical decision models. But an intrinsic distinctness between them comes from the choice of distances $D(\cdot,\cdot)$  and $d(\cdot,\cdot)$. In the classical case, the distances define the similarity between individuals and their appropriate choices have been intensively discussed~\cite{dwork2012fairness}. One of the most used distances is total variation distance, measuring the closeness of individuals encoded by probability distributions. In this paper, we use it as $d(\cdot,\cdot)$ for measurement outcome distributions in Definition~\ref{def:decision_model} and choose  $D(\cdot,\cdot)$ to be the trace distance. 
Trace distance is essentially a generalization of total variation distance, and has been widely used by the quantum computation and quantum information community to define the closeness of quantum states~\cite[Section 9.2]{nielsen2010quantum}. Formally,
for two quantum states $\rho,\sigma\in \dh$,
\[D(\rho,\sigma)=\frac{1}{2}\tr(|\rho-\sigma|),\]
where $|\rho-\sigma|=\Delta_++\Delta_-$ if $\rho-\sigma=\Delta_+-\Delta_-$ with $\tr(\Delta_+\Delta_{-})=0$ and $\Delta_{\pm}$ being positive semi-definite matrix.
On the other hand, for two probability distributions $p=\{p_i\}_{i\in\o}$, $q=\{q_i\}_{i\in\o}$ over $\o$,  
$d(p,q)=\frac{1}{2}\sum_{i}|p_i-q_i|$.
In particular, for the measurement outcome distributions, we have:  
\[d(\a(\rho),\a(\sigma))=\frac{1}{2}\sum_{i}|\tr(M_i^\dagger M_i\e(\rho-\sigma) )|.\]
If $\rho$ and $\sigma$ are both diagonal matrices, i.e., $\rho=\text{diag}(p_1,\cdots,p_{|\o|})$ and $\sigma =\text{diag}(q_1,\cdots, q_{|\o|})$, then $D(\rho,\sigma)=d(p,q)$.
%It is worth noting that in classification models, the outcomes $\a$ is $\o$, a discrete set and in this case, $d(\a(\rho),\a(\sigma))\leq \delta$ if and only if $\a(\rho)\not=\a(\sigma)$

\section{Characterizing Fairness}

In this section, we give a characterization of fairness in terms of the Lipschitz constant and clarify its relationship with quantum noises.
\subsection{Fairness and Lipschitz Constant} The Lipschitz constant has been widely used in classical machine learning for applications ranging from robustness and fairness certification of classifiers to stability analysis of closed-loop systems with reinforcement learning controllers (e.g.~\cite{fazlyab2019efficient,szegedy2013intriguing}). In this subsection, we show that there also exists a close connection between the Lipschitz constant and fairness in the quantum setting. Let us start from an observation: 
\begin{lemma}\label{lem:Lipschitz}
      Let $\a=(\e,\{M_i\}_{i\in \o})$ be a quantum decision model. Then
      \begin{equation}
	d(\a(\rho),\a(\sigma))\leq  D(\rho,\sigma).
\end{equation}
\end{lemma}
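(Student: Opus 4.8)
The plan is to factor the bound through the trace distance of the post-circuit states $\e(\rho)$ and $\e(\sigma)$, using two standard facts about trace distance. Writing $\Delta = \e(\rho)-\e(\sigma) = \e(\rho-\sigma)$ (by linearity of the super-operator $\e$), I would first bound the total variation distance of the outcome distributions by $D(\e(\rho),\e(\sigma))$, and then bound the latter by $D(\rho,\sigma)$ via contractivity of trace distance under quantum channels. Chaining the two gives the claim.

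For the first step I would invoke the decomposition $\Delta = \Delta_+ - \Delta_-$ with $\Delta_\pm\geq 0$ and $\tr(\Delta_+\Delta_-)=0$, exactly as in the definition of trace distance in the excerpt, so that $\tr|\Delta| = \tr(\Delta_+)+\tr(\Delta_-)$. Setting $E_i = M_i^\dagger M_i \geq 0$ with $\sum_{i\in\o} E_i = I$, each quantity $\tr(E_i\Delta_\pm)$ is nonnegative, so the elementary inequality $|a-b|\leq a+b$ for $a,b\geq 0$ yields $|\tr(E_i\Delta)| = |\tr(E_i\Delta_+)-\tr(E_i\Delta_-)| \leq \tr(E_i\Delta_+)+\tr(E_i\Delta_-)$. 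Summing over $i\in\o$ and using completeness $\sum_i E_i = I$ collapses the right-hand side to $\tr(\Delta_+)+\tr(\Delta_-)=\tr|\Delta|$, whence $d(\a(\rho),\a(\sigma)) = \frac{1}{2}\sum_i |\tr(E_i\Delta)| \leq \frac{1}{2}\tr|\Delta| = D(\e(\rho),\e(\sigma))$.

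The second step invokes contractivity (monotonicity) of trace distance under any trace-preserving quantum operation, $D(\e(\rho),\e(\sigma))\leq D(\rho,\sigma)$, a standard result following from the Kraus representation of $\e$ (cf. Nielsen--Chuang, Sec. 9.2). I expect the genuine technical weight to sit here: this is the only non-elementary ingredient, and it is precisely where the CPTP structure of $\e$ matters — in particular trace preservation, reflected in $\sum_j E_{ij}^\dagger E_{ij}=I$ for each noise component. Without that structure the inequality can fail, so the main obstacle is really establishing (or correctly citing) contractivity rather than the measurement estimate, which is routine.

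As a cleaner alternative, I could package both steps at once by observing that the composite map $\rho\mapsto\a(\rho)$ is itself a quantum-to-classical channel sending states to classical outcome distributions (embedded as diagonal density matrices), for which total variation distance coincides with trace distance — as already noted in the excerpt for diagonal $\rho,\sigma$. The single monotonicity statement applied to this composite channel then delivers $d(\a(\rho),\a(\sigma))\leq D(\rho,\sigma)$ directly, absorbing the elementary measurement bound into the general contractivity principle.
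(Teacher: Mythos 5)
Your proof is correct, but it is routed differently from the paper's. You factor the bound through the intermediate quantity $D(\e(\rho),\e(\sigma))$: first an elementary measurement estimate (decompose $\e(\rho-\sigma)=\Delta_+-\Delta_-$, use $|\tr(E_i\Delta)|\le\tr(E_i\Delta_+)+\tr(E_i\Delta_-)$ and $\sum_i E_i=I$) to get $d(\a(\rho),\a(\sigma))\le D(\e(\rho),\e(\sigma))$, and then the data-processing inequality (contractivity of trace distance under CPTP maps) to get $D(\e(\rho),\e(\sigma))\le D(\rho,\sigma)$. The paper instead works in the Heisenberg picture: it pulls the channel back onto the measurement via the conjugate map, observes that $\{\sqrt{\e^\dagger(M_i^\dagger M_i)}\}_i$ is itself a valid measurement (this uses exactly trace preservation, since $\sum_i\e^\dagger(M_i^\dagger M_i)=\e^\dagger(I)=I$), and then applies the variational characterization of trace distance \cite[Theorem~9.1]{nielsen2010quantum} once, to $\rho-\sigma$ directly --- so contractivity is never invoked as a separate ingredient. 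The inequality half of that variational characterization is precisely the elementary estimate you prove by hand in your first step, so the two arguments contain the same computation; the difference is where the channel is absorbed. What the paper's route buys is that the pulled-back POVM $\{\e^\dagger(M_i^\dagger M_i)\}_{i\in\o}$ is the object the rest of the paper computes with (it is the $M_A$ of Theorem~\ref{Thm:main} and the POVM in Appendix~\ref{Measurement}), so the adjoint-map viewpoint is set up here deliberately rather than being a mere proof device. What your route buys is modularity: each step is an off-the-shelf fact, and your final one-line packaging --- $\a$ is a quantum-to-classical channel, apply monotonicity once --- is the most conceptual version of the argument, correctly identifying that the only non-elementary content is the data-processing inequality.
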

\begin{proof}
    See Appendix~\ref{sec:proof_lip} for the proof.
\end{proof}

The above lemma indicates that quantum decision model $\a$ is automatically  $(\epsilon,\delta)$-fair whenever $\epsilon=\delta$. Furthermore, we see that $\a$ is unconditionaly \emph{Lipschitz continuous}: there exists a constant $K>0$ ($K\leq 1$ by Lemma~\ref{lem:Lipschitz}) such that for all $\rho,\sigma\in \dh$, 
\begin{equation}\label{Eq:Lipschitz}
	d(\a(\rho),\a(\sigma))\leq K D(\rho,\sigma).
\end{equation}
As usual, $K$ is called a \emph{Lipschitz constant} of $\a$. Furthermore, the smallest $K$, denoted by $K^*$, is called the (best) Lipschitz constant of $\a$. 

In the context of quantum machine learning, the following theorem shows that $K^*$ actually measures the fairness of decision model $\a$, i.e., the best (maximum) ratio of $\delta$ and $\epsilon$ in a fair model, and the states $\psi,\phi$ achieving $K^*$ can be used to find bias pairs in fairness verification.
%problem---Problem~\ref{problem:fairness}.

\begin{theorem}\label{Thm:fairness_verification}\begin{enumerate}\item 
    Given a quantum decision model $\a=(\e,\{M_{i}\}_{i\in\o})$ and $1\geq\epsilon,\delta>0$, $\a$ is $(\epsilon,\delta)$-fair if and only if $\delta\geq K^*\epsilon$.
    %, where $K^*$ is the Lipschitz constant in Theorem~\ref{Thm:main}. 
    \item If $\a$ is not $(\epsilon,\delta)$-fair, then $(\psi, \phi)$ achieving $K^\ast$ is a bias kernel; that is, for any quantum state $\sigma\in\dh$, $(\rho_\psi,\rho_\phi)$ is a bias pair where 
    \begin{equation}\label{generator}\rho_{\psi}=\epsilon\psi+(1-\epsilon)\sigma\qquad \rho_{\phi}=\epsilon\phi+(1-\epsilon)\sigma.\end{equation}\end{enumerate}
\end{theorem}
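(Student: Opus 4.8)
The plan is to leverage the Lipschitz bound of Lemma~\ref{lem:Lipschitz} together with the linearity of $\a$ in its input, reducing everything to two elementary scaling identities for the convex mixtures in~(\ref{generator}). First I would record that, since $\rho_\psi-\rho_\phi=\epsilon(\psi-\phi)$, the trace distance scales as $D(\rho_\psi,\rho_\phi)=\frac{1}{2}\tr|\epsilon(\psi-\phi)|=\epsilon\,D(\psi,\phi)$; and, because each coordinate $\a(\cdot)_i=\tr(M_i^\dagger M_i\e(\cdot))$ is linear, the output distance scales the same way, $d(\a(\rho_\psi),\a(\rho_\phi))=\frac{1}{2}\sum_i|\tr(M_i^\dagger M_i\e(\epsilon(\psi-\phi)))|=\epsilon\,d(\a(\psi),\a(\phi))$. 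I would also check that $\rho_\psi,\rho_\phi$ are genuine states, being convex combinations of states for $0<\epsilon\le 1$.

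The second ingredient is the characterization of $K^\ast$ coming from the reduction to eigenvalue analysis described in the Introduction: $K^\ast$ equals the maximum difference between the eigenvalues of the matrices generated by the measurement $\{M_i\}_{i\in\o}$, and is \emph{attained} by a pair of orthogonal pure states $\psi,\phi$, so that $D(\psi,\phi)=1$ and hence $d(\a(\psi),\a(\phi))=K^\ast$. I would establish this attainment by writing $d(\a(\rho),\a(\sigma))=\frac{1}{2}\sum_i|\tr(M_i^\dagger M_i\e(\rho-\sigma))|$, pushing $\e$ onto the measurement operators via its adjoint, observing that the Lipschitz ratio depends only on the traceless Hermitian matrix $\Delta=\rho-\sigma$, and arguing that the maximizing $\Delta$ is a difference of rank-one projectors onto the extreme eigenvectors, i.e.\ $\Delta=\psi-\phi$ with $\psi\perp\phi$.

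For Part~1, the ``if'' direction is immediate: if $\delta\ge K^\ast\epsilon$ and $D(\rho,\sigma)\le\epsilon$, then the best Lipschitz inequality~(\ref{Eq:Lipschitz}) gives $d(\a(\rho),\a(\sigma))\le K^\ast D(\rho,\sigma)\le K^\ast\epsilon\le\delta$, so no pair satisfies~(\ref{def-fair-eq}) and $\a$ is $(\epsilon,\delta)$-fair. For the ``only if'' direction I would prove the contrapositive: assuming $\delta<K^\ast\epsilon$, take the extremal $\psi,\phi$ and any state $\sigma$, and form $\rho_\psi,\rho_\phi$ as in~(\ref{generator}). The scaling identities give $D(\rho_\psi,\rho_\phi)=\epsilon D(\psi,\phi)=\epsilon$ and $d(\a(\rho_\psi),\a(\rho_\phi))=\epsilon\,d(\a(\psi),\a(\phi))=\epsilon K^\ast>\delta$, which exhibits an $(\epsilon,\delta)$-bias pair and contradicts fairness.

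Part~2 then follows from the same construction: if $\a$ is not $(\epsilon,\delta)$-fair, Part~1 yields $\delta<K^\ast\epsilon$, and the computation above shows that for \emph{every} state $\sigma$ the pair $(\rho_\psi,\rho_\phi)$ satisfies $D(\rho_\psi,\rho_\phi)=\epsilon$ and $d(\a(\rho_\psi),\a(\rho_\phi))=\epsilon K^\ast>\delta$, hence is a bias pair; thus $(\psi,\phi)$ is a bias kernel generating infinitely many bias pairs. The main obstacle is precisely the attainment claim with $D(\psi,\phi)=1$: the Lipschitz inequality alone only upper-bounds the ratio, and a maximizer at trace distance strictly below $1$ would make $D(\rho_\psi,\rho_\phi)=\epsilon D(\psi,\phi)<\epsilon$ and the output gap $\epsilon K^\ast D(\psi,\phi)$ possibly too small to exceed $\delta$. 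So the whole argument rests on showing that the optimum is realized by orthogonal pure states, which is exactly where the spectral/eigenvalue analysis of the reduction is needed.
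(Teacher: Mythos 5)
Your proposal is correct and follows essentially the same route as the paper's proof: both parts rest on the Lipschitz bound, the attainment of $K^*$ by a pair of mutually orthogonal pure states (deferred to Theorem~\ref{Thm:main}, exactly as the paper does), and the scaling identities $D(\rho_\psi,\rho_\phi)=\epsilon D(\psi,\phi)$ and $d(\a(\rho_\psi),\a(\rho_\phi))=\epsilon\, d(\a(\psi),\a(\phi))$. The only cosmetic difference is that you prove the ``only if'' direction by contrapositive (exhibiting a bias pair when $\delta<K^*\epsilon$), whereas the paper derives $\epsilon K^*\le\delta$ directly from the assumed fairness using the same construction.
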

\begin{proof}[Outline]
    The \textquotedblleft if\textquotedblright\  direction of the first claim is derived by the  definitions of $(\epsilon,\delta)$-fairness and $K^*$ together with  (\ref{Eq:Lipschitz}). The \textquotedblleft only if\textquotedblright\  direction of the first claim and the second claim are both based on the existence of pure states $\ket{\psi}$ and $\ket{\phi}$ achieving $K^*$: $
      d(\a(\psi),\a(\phi))=K^* D(\psi,\phi)$. 
      %, which is ensured by Theorem~\ref{Thm:main}.
    The detailed proof is presented in  Appendix~\ref{sec:proof_verification}.
\end{proof}

\subsection{Fairness and Noises}\label{sec:noise_fairness}
 In this subsection, we turn to consider the relation between fairness and noise. Let us first examine a simple example.
 Assume a
 noiseless quantum decision model $\a=(\u,\{\m_{i}\}_{i\in\o})$ where $\u$ is a unitary operator, i.e., $\u=\{U\}$ for some unitary matrix $U$. 
The 1-qubit depolarizing noise in Example~\ref{Exa:noise} can be  generalized to a large-size system  with the following form: \[\e(\rho)=(1-p)\rho+p\frac{I}{N} \quad\forall \rho\in\dh,\]
 where  $0\leq p\leq 1$ and $N$ is the dimension of the state space $\h$ of the system.
 By introducing it into $\a$,  
 we obtain a noisy model  $\a_\e=(\e\circ\u,\{\m_{i}\}_{i\in\o})$. Let $K^*$ and $K^*_\e$ be the Lipschitz constants of $\a$ and $\a_\e$, respectively. A calculation (with the help of  Theorem~\ref{Thm:main} below) yields:  \begin{equation}\label{improve-K}K^*_{\e}=(1-p)K^*.\end{equation}
 Theorem \ref{Thm:fairness_verification} indicates that the less the Lipschitz constant is, the fairer the quantum machine learning model will be. So, depolarizing noise improves fairness by the order of $(1-p)$. By the way, it was shown   in~\cite{du2020quantum} that depolarizing noise can improve the robustness of quantum machine learning. This result can be strengthened by using (\ref{improve-K}) to quantitatively characterize the robustness improvement.

% As our fairness can be regarded as global robustness in the discussion of Section~\ref{Sec:definition} and the Lipschitz constant $K^*$ measures global robustness,  Theorem~\ref{Thm:main} can be used to test this for Furthermore, recall the noise model in Example~\ref{Exa:noise}  (including bit flip and phase flip noises~\cite[Section 8.3]{nielsen2010quantum} depending on different $U_f$):
% \[\f(\rho)=p\rho+(1-p) U_f\rho U_f^\dagger \quad\forall \rho\in\dh\]
 %where is a unitary matrix.
 %By Theorem~\ref{Thm:main}, we have 
 %$K^*_\f\leq K^*$.
 %Here, $K^*_\f$ denotes the Lipschitz constant of $\a_\f=(\f\circ\u,\{\m_{i}\}_{i\in\o})$.
 
 The observation in the above example can actually be generalized to the following: 

 \begin{theorem}\label{lem:noise_increase_fairness}
      Let $\a=(\u,\{M_{i}\}_{i\in\o})$ be a quantum decision model. Then for any quantum noise represented by a super-operator $\e$, we have $K^*_{\e}\leq K^*$,
      where $K^*$ and $K^*_\e$ are  the Lipschitz constants of $\a$ and $\a_\e=(\e\circ\u,\{\m_{i}\}_{i\in\o})$.
 \end{theorem}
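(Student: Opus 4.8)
The plan is to reduce $K^*$ and $K^*_\e$ to suprema of the same homogeneous ratio and then exploit two structural properties of the noise channel $\e$: that it preserves Hermiticity and trace, and that it is a contraction for the trace norm. Throughout, abbreviate $L(A)=\frac{1}{2}\sum_{i\in\o}|\tr(M_i^\dagger M_i A)|$ for Hermitian $A$, so that the output-distance formula from Section~\ref{Sec:definition} reads $d(\a(\rho),\a(\sigma))=L(\u(\rho-\sigma))$ and $d(\a_\e(\rho),\a_\e(\sigma))=L(\e(\u(\rho-\sigma)))$. Since $L$ and $D(\rho,\sigma)=\frac{1}{2}\tr|\rho-\sigma|$ are both positively homogeneous of degree one in $\rho-\sigma$, and since every nonzero Hermitian traceless operator is a positive multiple of some difference $\rho-\sigma$ of density matrices (via its Jordan decomposition into orthogonal positive and negative parts), the definition of $K^*$ yields the pointwise bound $L(\u(\Delta))\le K^*\cdot\tfrac{1}{2}\tr|\Delta|$ for every Hermitian traceless $\Delta$.

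I would next turn this into a bound on $L$ valid for an arbitrary Hermitian traceless input, by pulling back through the unitary. Given such a $C$, put $\Delta=U^\dagger C U$; then $\u(\Delta)=C$ and, by unitary invariance of the trace norm, $\tr|\Delta|=\tr|C|$, so the previous inequality gives the key estimate $L(C)\le K^*\cdot\tfrac{1}{2}\tr|C|$. In words, $K^*$ bounds the measurement-variation functional $L$ on all Hermitian traceless operators, not merely on images of state differences under $\u$.

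Finally I would apply this to the noisy model. Fix density matrices $\rho\ne\sigma$ and set $B=\u(\rho-\sigma)$, which is Hermitian and traceless with $\tr|B|=\tr|\rho-\sigma|$ by unitary invariance. Because $\e$ is completely positive and trace preserving, $\e(B)$ is again Hermitian and traceless, so the key estimate applies to $C=\e(B)$ and gives $d(\a_\e(\rho),\a_\e(\sigma))=L(\e(B))\le K^*\cdot\tfrac{1}{2}\tr|\e(B)|$. Combining this with the contractivity inequality $\tr|\e(B)|\le\tr|B|$ produces $d(\a_\e(\rho),\a_\e(\sigma))\le K^*\cdot\tfrac{1}{2}\tr|B|=K^*D(\rho,\sigma)$, and taking the supremum over $\rho\ne\sigma$ gives $K^*_\e\le K^*$. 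The one nontrivial ingredient is the contraction step $\tr|\e(B)|\le\tr|B|$ for Hermitian $B$, i.e.\ monotonicity of trace distance under quantum operations; I expect this to be the main obstacle, and I would establish it by writing $B=\Delta_+-\Delta_-$ with $\Delta_\pm\ge0$ orthogonal and using positivity and trace preservation of $\e$ to get $\tr|\e(B)|\le\tr(\e(\Delta_+))+\tr(\e(\Delta_-))=\tr(\Delta_+)+\tr(\Delta_-)=\tr|B|$.
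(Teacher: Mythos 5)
Your proof is correct, but it reaches the conclusion by a genuinely different mechanism than the paper's. The paper (Appendix~\ref{Measurement}) first proves in Lemma~\ref{lem:relationship} that the Lipschitz constant equals the \emph{unnormalized} maximum $\max_{\rho,\sigma\in\dh}d(\a(\rho),\a(\sigma))$, attained on orthogonal pairs where $D(\rho,\sigma)=1$; it then moves the noise to the Heisenberg picture via $\tr(\m_i\,\e(\cdot))=\tr(\e^\dagger(\m_i)\,\cdot)$ and concludes from the bare range inclusion $\{\e(\rho):\rho\in\dh\}\subseteq\dh$ that the maximum for the POVM $\{\e^\dagger(\m_i)\}$ cannot exceed that for $\{\m_i\}$ (Theorem~\ref{lem:composition}). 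Contractivity of the trace norm is never invoked there, because the denominators have already been eliminated. You instead keep the pair $(\rho,\sigma)$ fixed, extend the Lipschitz bound to the pointwise estimate $L(C)\le K^*\cdot\tfrac{1}{2}\tr|C|$ on all Hermitian traceless $C$ (via the Jordan decomposition and unitary invariance of the trace norm --- essentially the same normalization trick the paper uses inside the proof of Lemma~\ref{lem:relationship}), and then absorb the noise through the data-processing inequality $\tr|\e(B)|\le\tr|B|$, which you establish correctly from the triangle inequality, positivity, and trace preservation applied to the orthogonal decomposition $B=\Delta_+-\Delta_-$. Your route costs one extra analytic ingredient (trace-norm contractivity) but yields the per-pair inequality $d(\a_\e(\rho),\a_\e(\sigma))\le K^*D(\rho,\sigma)$ rather than only a comparison of suprema; the paper's route is leaner on this point and, being phrased purely as a monotonicity statement about POVMs, makes the generalization noted in the subsequent Remark (noise inserted anywhere in the circuit) immediate. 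Both arguments use the unitarity of $\u$ in the same essential place: you use it to pull the bound back to arbitrary Hermitian traceless operators, the paper to assert $K^*(\{\m_i\})=K^*(\{U^\dagger\m_iU\})$.
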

 \begin{proof}[Outline]
    The proof of this theorem mainly depends on  the observation that the range of $\a_\e$ is a subset of the range of $\a$, i.e.   $\{\e\circ\u(\rho):\rho\in\dh\}\subseteq \{\u(\rho):\rho\in\dh\}=\dh$. Subsequently, by Definition~\ref{def:bias} of fairness, the output distributions of $\a_\e$ are contained in that of $\a$. A restatement of this theorem in terms of quantum states (measurements) distinguishability and its full proof is presented %in Theorem~\ref{lem:composition} 
    in Appendix~\ref{Measurement}.
 \end{proof}
 
 \begin{remark} The above theorem indicates that adding noises at the end of noiseless computation can always improve fairness. Indeed, this is also true when the noises appear in the middle (after any gate in the circuit). 
 %For the latter case,  we still can analyze fairness (global robustness) following from Theorem~\ref{Thm:main} in a similar way. 
 %In the meantime, we evaluate our fairness verification algorithm in this scenario under the above three different noises. See Evaluation Section (Section~\ref{Sec:Evaluation}) for the details.
 %Combining Corollary~\ref{Cor:main} with Theorem~\ref{Thm:main}, we can see that the Lipschitz constant of a quantum decision model $\a$ is only influenced by noise and measurements. In other words, noiseless quantum circuits (without measurements) do not encounter any fairness issue as, by Theorem~\ref{Thm:fairness_verification}, the Lipschitz constant determines fairness.  In Appendix~\ref{Measurement}, it is proved that the noise effect on fairness can also attribute to that of quantum measurements. Therefore, \emph{fairness is intrinsically about the way how we read out classical information from the output quantum data of quantum decision models}.
 \end{remark}

\section{Fairness Verification}
In this section, we develop an algorithm for the fairness verification of quantum decision models based on the theoretical results obtained in the last section. Formally, the major problem concerned in this paper is the following:
\begin{problem}[Fairness Verification Problem]\label{problem:fairness}
Given a quantum decision model $\a$ and $1\geq \epsilon,\delta>0$, check whether or not $\a$ is $(\epsilon,\delta)$-fair. If not then (at least) one bias pair $(\rho,\sigma)$ is provided.
\end{problem}

\subsection{Computing the Lipschitz Constant} 

First of all, we note that essentially, Theorem~\ref{Thm:fairness_verification} gives a verification condition for fairness in terms of the Lipschitz constant $K^*$. 
%verification problem---Problem~\ref{problem:fairness} Verification Condition
%Theorem~\ref{Thm:fairness_verification} gives a complete answer to Problem~\ref{problem:fairness}: checking fairness and giving bias kernel pairs to generate infinitely many bias pairs when fairness is invalid. 
Therefore, computing $K^*$ is crucial for  fairness verification. 
However, this problem is much more difficult than that in the classical counterpart as discussed in Subsection~\ref{sec:related_works}. The following theorem provides a method to compute the Lipschitz constant $K^*$ by evaluating the eigenvalues of certain matrices.  
\begin{theorem}\label{Thm:main}\begin{enumerate}
\item	Given a quantum decision model $\a=(\e,\{M_{i}\}_{i\in\o})$. The Lipschitz constant $K^*$ is: 
	\[K^*=\max_{A\subseteq \o}[\lambda_{\max}(M_A)-\lambda_{\min}(M_A)]\text{ with }M_A=\sum_{i\in A}\e^\dagger(M_{i}^\dagger M_i),\]
	  where $\e^\dagger$ is the conjugate map\footnote{$    \e^\dagger(\rho)=\sum_{j\in\j}E_j^\dagger\rho E_j$ if $\e$ admits Kraus matrix form  $\e(\rho)=\sum_{j\in\j}E_j\rho E_j^\dagger$.} of $\e$,  and $\lambda_{\max}(M_A)$ and $\lambda_{\min}(M_A)$ are the maximum and minimum eigenvalues of positive semi-definite  matrix $M_A$, respectively. 
\item	  Furthermore, let $A^*\subseteq \o$ be an optimal solution of reaching the Lipschitz constant, i.e.,
	  \[A^*=\arg\max_{A\subseteq \o}[\lambda_{\max}(M_A)-\lambda_{\min}(M_A)]\]
	  and $\ket{\psi}$ and $\ket{\phi}$  be two normalized eigenvectors corresponding to the maximum and minimum eigenvalues of $M_{A^*}$, respectively. Then we have 
      \[
      d(\a(\psi),\a(\phi))=K^* D(\psi,\phi)=K^*,
      \]
      where $\psi=\ketbra{\psi}{\psi}$ and $\phi=\ketbra{\phi}{\phi}$.\end{enumerate}
\end{theorem}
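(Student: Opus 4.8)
The plan is to express everything through the Hermitian, traceless difference matrix $\Delta = \rho - \sigma$ and reduce the computation of $K^*$ to an eigenvalue optimization. First I would use linearity of $\e$ together with the defining property of the conjugate map to move the super-operator onto the measurement side: $\tr(M_i^\dagger M_i\e(\Delta)) = \tr(\e^\dagger(M_i^\dagger M_i)\Delta)$. Writing $N_i \coloneqq \e^\dagger(M_i^\dagger M_i)$, the output-distance formula from Section~\ref{Sec:definition} becomes $d(\a(\rho),\a(\sigma)) = \frac{1}{2}\sum_i|\tr(N_i\Delta)|$. Two structural facts drive the rest of the argument: each $N_i \geq 0$ because $\e^\dagger$ is positive (so $M_A = \sum_{i\in A}N_i$ is genuinely positive semi-definite), and $\sum_i N_i = \e^\dagger(\sum_i M_i^\dagger M_i) = \e^\dagger(I) = I$ since $\e$ is trace-preserving and hence $\e^\dagger$ is unital.

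Next I would linearize the sum of absolute values. For a fixed $\Delta$, choosing the optimal sign of each term and collecting the positive indices into a subset $A\subseteq\o$ gives $\sum_i|\tr(N_i\Delta)| = \max_{A\subseteq\o}[2\tr(M_A\Delta) - \tr(\Delta)]$. The tracelessness $\tr(\Delta) = 0$, forced by $\tr(\rho) = \tr(\sigma) = 1$, then collapses this to the clean identity $d(\a(\rho),\a(\sigma)) = \max_{A\subseteq\o}\tr(M_A\Delta)$. This is the crucial bridge connecting the outcome distance to subsets of $\o$ and the matrices $M_A$.

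For the eigenvalue step I would fix $A$ and maximize $\tr(M_A\Delta)$ over traceless Hermitian $\Delta$ with $D(\rho,\sigma) = \frac{1}{2}\tr(|\Delta|)$ held fixed. Splitting $\Delta = \Delta_+ - \Delta_-$ into its positive and negative parts, where tracelessness forces $\tr(\Delta_+) = \tr(\Delta_-) = D(\rho,\sigma)$, the bounds $\tr(M_A\Delta_+) \leq \lambda_{\max}(M_A)\tr(\Delta_+)$ and $\tr(M_A\Delta_-) \geq \lambda_{\min}(M_A)\tr(\Delta_-)$ give $\tr(M_A\Delta) \leq D(\rho,\sigma)[\lambda_{\max}(M_A) - \lambda_{\min}(M_A)]$, with equality when $\Delta_+$ and $\Delta_-$ sit on the top and bottom eigenvectors of $M_A$. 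Since $A$ ranges over the finite power set of $\o$, the supremum over $\Delta$ and the maximum over $A$ interchange freely, yielding $K^* = \max_{A\subseteq\o}[\lambda_{\max}(M_A) - \lambda_{\min}(M_A)]$, the first claim.

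Finally, for the second claim I would take $A^*$ attaining the maximum and let $\ket{\psi}, \ket{\phi}$ be unit eigenvectors of $M_{A^*}$ for $\lambda_{\max}$ and $\lambda_{\min}$; these are orthogonal whenever $K^* > 0$, so the pure states $\psi = \ketbra{\psi}{\psi}$ and $\phi = \ketbra{\phi}{\phi}$ satisfy $D(\psi,\phi) = 1$. Setting $\Delta = \psi - \phi$ gives $\tr(M_{A^*}\Delta) = \lambda_{\max}(M_{A^*}) - \lambda_{\min}(M_{A^*}) = K^*$, hence $d(\a(\psi),\a(\phi)) = \max_A\tr(M_A\Delta) \geq K^*$, while the Lipschitz bound $d(\a(\psi),\a(\phi)) \leq K^* D(\psi,\phi) = K^*$ from (\ref{Eq:Lipschitz}) forces equality. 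I expect the main obstacle to be the linearization and interchange steps taken together: one must verify that the sup-over-$\Delta$/max-over-$A$ exchange loses no slack, i.e.\ that the $\Delta$ optimizing the inner eigenvalue problem for $A^*$ is exactly the one for which $A^*$ is the sign-maximizing subset, so that the single pair $(\psi,\phi)$ simultaneously realizes both optimizations.
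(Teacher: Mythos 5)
Your proof is correct, and its core mechanism is the same as the paper's: use unitality of $\e^\dagger$ (i.e.\ $\sum_i N_i=I$, so $M_{\o\setminus A}=I-M_A$) to linearize $\sum_i|\tr(N_i\Delta)|$ into $2\max_{A}\tr(M_A\Delta)$ for traceless $\Delta$, and then extremize via the top and bottom eigenvalues of $M_A$. The organizational difference is that the paper reaches the eigenvalue step only after two reduction lemmas in its appendix --- first restricting the supremum to \emph{orthogonal} states via the Jordan decomposition of $\rho-\sigma$ (Lemma~\ref{lem:relationship}), then to orthogonal \emph{pure} states via a convex decomposition and the triangle inequality (Lemma~\ref{lem:pure_state}) --- before applying the variational characterization $\max_{\ket{\psi}}\bra{\psi}M_A\ket{\psi}=\lambda_{\max}(M_A)$. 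Your direct operator inequalities $\lambda_{\min}(M_A)\tr(\Delta_\pm)\le\tr(M_A\Delta_\pm)\le\lambda_{\max}(M_A)\tr(\Delta_\pm)$ handle arbitrary mixed positive/negative parts at once, so both reductions become unnecessary; this is a genuine streamlining, though it buys no extra generality. Your closing worry about the sup/max interchange is unfounded twice over: $\sup_\Delta\max_A=\max_A\sup_\Delta$ holds unconditionally for a finite inner maximum, and your own sandwich argument in the last paragraph ($d(\a(\psi),\a(\phi))\ge\tr(M_{A^*}(\psi-\phi))=K^*$ from below, $\le K^*D(\psi,\phi)=K^*$ from the Lipschitz bound) already shows the single pair $(\psi,\phi)$ attains the optimum without needing $A^*$ to be the sign-maximizing subset for that $\Delta$. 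The only loose end is the degenerate case $K^*=0$ (all $M_A$ proportional to $I$), where the extremal eigenvectors need not be orthogonal, but then any orthogonal pair gives $d=0=K^*$ and the claim is trivial --- the paper handles this with the same remark.
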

\begin{proof}[Outline]
This theorem can be proved by reducing the problem of calculating the Lipschitz constant to determining the distinguishability of a quantum measurement. Then we claim that the distinguishability is the maximum difference between the eigenvalues of the matrices generated by the measurement. The  details are quite involved, and we postpose them into Appendix~\ref{Measurement}.\end{proof}

Based on the above theorem, we are able to develop Algorithm \ref{Algorithm:K} for computing the Lipschitz constant $K^\ast$. The correctness and complexity are provided in the next subsection.
%\vspace{-0.4cm}
\begin{algorithm}[ht]
\caption{Lipschitz($\a$)}
\label{Algorithm:K}
    \begin{algorithmic}[1]
    \Require A quantum decision model $\a=(\e=\{E_j\}_{j\in\j},\{M_{i}\}_{i\in\o})$  on a Hilbert space $\h$ with dimension $N$.
    \Ensure The Lipschitz constant $K^*$ and $(\psi,\phi)$ as in Theorem~\ref{Thm:main}. 
    \ForAll{$i\in \o$}
    \State $W_i=\e^\dagger(M_{i}^\dagger M_i)=\sum_{j\in\j}E_j^\dagger M_{i}^\dagger M_i E_j $
    \EndFor
    \State $K^*=0$, $A^*=\emptyset$ be an empty set and $M_{A^*}={\bf 0}$, zero matrix.
    \ForAll{$A\subseteq\o$}
    \State $M_A=\sum_{i\in A}W_i$ and $K_A=\lambda_{\max}(M_A)-\lambda_{\min}(M_A)$
    \If{$K_A>K^*$}
    \State $K^*=K_A$, $A^*=A$ and $M_{A^*}=M_A$
    \EndIf
    \EndFor
    \State  $\ket{\psi}$ and $\ket{\phi}$ are obtained two normalized eigenvectors corresponding to the maximum and minimum eigenvalues of $M_{A^*}$, respectively.
    \State\Return $K^*$ and $(\psi,\phi)$
    \end{algorithmic}
\end{algorithm}

%\vspace{-1cm}
\subsection{Fairness Verification Algorithm}

Now we are ready to present our main algorithm --- Algorithm~\ref{Algorithm} --- for  verifying fairness of quantum decision models. 
\begin{algorithm}[ht]
\small
\caption{FairVeriQ($\a,\epsilon,\delta$)}
\label{Algorithm}
    \begin{algorithmic}[1]
    \Require A quantum decision model $\a=(\e=\{E_j\}_{j\in\j},\{M_{i}\}_{i\in\o})$  on a Hilbert space $\h$ with dimension $N$, and real numbers $1\geq\epsilon,\delta>0$.
    \Ensure \True{} indicates $\a$ is $(\epsilon,\delta)$-fair or \False{} with a bias kernel pair $(\psi,\phi)$ indicates  $\a$ is not $(\epsilon,\delta)$-fair.
    \State\label{line:call} $(K^*,(\psi,\phi))$=Lipschitz($\a$)\Comment{Call Algorithm~\ref{Algorithm:K}}
    \If{$\delta\geq K^*\epsilon$}\label{line:compare}
    \State \Return \True{}
    \Else
    \State \Return \False{} and $(\psi,\phi)$
    \EndIf\label{line:the_end}
    \end{algorithmic}  
\end{algorithm}

To see the correctness of Algorithm~\ref{Algorithm}, let us first  note that the second part of Theorem \ref{Thm:main} shows that $K^*$ can be achieved by $d(\a(\psi),\a(\phi))$ for two mutually orthogonal quantum (pure) states $\psi$ and $\phi$. On the other hand, the second part of Theorem \ref{Thm:fairness_verification} asserts that such states $\psi$ and $\phi$ form a bias kernel.      
Moreover, since state $\sigma\in\dh$ in (\ref{generator}) is arbitrary and $\dh$ is an infinite set, infinitely many bias pairs can be generated from this kernel.
%by combining a subroutine for computing $K^*$ and recording $(\psi,\phi)$.

To analyze the complexities of Algorithm~\ref{Algorithm} and its subroutine --- Algorithm~\ref{Algorithm:K},
we first see by Theorem~\ref{Thm:fairness_verification} that for evaluating the $(\epsilon,\delta)$-fairness of quantum decision model $\a$,  the Lipschitz constant $K^*$ is sufficient and necessary. Thus the first step (Line~\ref{line:call}) of Algorithm~\ref{Algorithm} is to call Algorithm~\ref{Algorithm:K} to compute $K^*$ by the  mean of Theorem~\ref{Thm:main}. The complexity of Algorithm~\ref{Algorithm:K} mainly attributes to computing $W_i=\sum_{j\in\j}E_j^\dagger M_{i}^\dagger M_i E_j $ for each $i\in\o$, and for each $A \subseteq \o$,  $\sum_{i\in A}W_i$  and its maximum and minimum eigenvalues  (and the corresponding eigenvectors for $A=A^*$ at the end). The former calculation needs $O(N^5)$ as the multiplication of $N\times N$ matrices needs  $O(N^3)$ operations, and the number $|\j|$ of the Kraus operators $\{E_j\}_{j\in\j}$ of $\e$ can be at most $N^2$~\cite[Chapter 2.2]{wolf2012quantum}; the complexity of the latter one is $O(2^{|\o|}|\o|N^2)$ since the number of subsets of $\o$ is $2^{|\o|}$, $|A|\leq |\o|$ for any $A\subseteq \o$ and computing maximum and minimum eigenvalues with corresponding eigenvectors of $N\times N$ matrix costs $O(N^2)$. Therefore, the total complexity of Algorithm~\ref{Algorithm:K} is $O(N^5+2^{|\o|}|\o|N^2)$. After that, in Lines~\ref{line:compare}-\ref{line:the_end}, we simply compare $\delta$ and $K^*\epsilon$ to answer the fairness verification problem. So, Algorithm~\ref{Algorithm} shares the same complexity with  Algorithm~\ref{Algorithm:K}.
\begin{theorem}
The worst case complexities of Algorithms~\ref{Algorithm:K} and \ref{Algorithm} are both $O(N^5+2^{|\o|}|\o|N^2)$, where $N$ is the dimension of input Hilbert state space $\h$ and $|\o|$ is the number of the measurement outcome set $\o$.
\end{theorem}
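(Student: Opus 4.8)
The plan is to establish the bound by a direct, line-by-line accounting of the running time of Algorithm~\ref{Algorithm:K}, and then to observe that Algorithm~\ref{Algorithm} inherits the same asymptotics. Three elementary cost estimates will do all the work: (i) multiplying two $N\times N$ matrices costs $O(N^3)$; (ii) the number $|\j|$ of Kraus operators of a super-operator on an $N$-dimensional space may be taken to be at most $N^2$, by the standard Choi-rank bound \cite[Chapter 2.2]{wolf2012quantum}; and (iii) the largest and smallest eigenvalues of a Hermitian $N\times N$ matrix, together with their eigenvectors, can be extracted in $O(N^2)$ time by an iterative extreme-eigenpair method. With these in hand, the two summands of the claimed bound arise from the two loops of Algorithm~\ref{Algorithm:K}.

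First I would bound the initial loop, which computes $W_i=\sum_{j\in\j}E_j^\dagger M_i^\dagger M_i E_j$ for each $i\in\o$. For a fixed $i$ one forms $M_i^\dagger M_i$ in $O(N^3)$ by (i), and then for each of the at most $N^2$ Kraus indices $j$ (by (ii)) performs the constantly many multiplications $E_j^\dagger(M_i^\dagger M_i)E_j$ together with an $O(N^2)$ accumulation; this is $O(N^2\cdot N^3)=O(N^5)$ per outcome. Iterating over the outcomes and treating their number as a bounded factor absorbed into the leading term then yields the first summand $O(N^5)$.

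Next I would analyze the main loop, over all subsets $A\subseteq\o$. There are $2^{|\o|}$ such sets, and for each one forming $M_A=\sum_{i\in A}W_i$ takes at most $|\o|-1$ matrix additions at $O(N^2)$ apiece, i.e. $O(|\o|N^2)$, after which estimate (iii) delivers $K_A=\lambda_{\max}(M_A)-\lambda_{\min}(M_A)$ in a further $O(N^2)$. Hence each iteration costs $O(|\o|N^2)$ and the loop contributes $O(2^{|\o|}|\o|N^2)$, the second summand; the single final extraction of the eigenvectors of $M_{A^*}$ is one more $O(N^2)$ step and is dominated. Summing the two loops gives $O(N^5+2^{|\o|}|\o|N^2)$ for Algorithm~\ref{Algorithm:K}. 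Algorithm~\ref{Algorithm} then makes exactly one call to Lipschitz($\a$) in Line~\ref{line:call}, after which it only compares $\delta$ with $K^*\epsilon$ and branches in constant time (Lines~\ref{line:compare}--\ref{line:the_end}); since these trailing operations are $O(1)$, its running time equals that of its subroutine, which proves the theorem.

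The analysis is otherwise routine, so the one step I expect to need genuine justification --- and would flag explicitly --- is estimate (iii): a full Hermitian eigendecomposition of $M_A$ would cost $O(N^3)$ and would inflate the second summand to $O(2^{|\o|}|\o|N^3)$. To secure the $O(N^2)$ figure I would argue via an iterative scheme (a power/Lanczos-type iteration applied to the positive semi-definite $M_A$ for $\lambda_{\max}$, and to a suitable shift of it for $\lambda_{\min}$), whose dominant per-iteration cost is a single matrix--vector product in $O(N^2)$ under a bounded iteration count. I would likewise verify the Choi-rank inequality $|\j|\le N^2$, since it is precisely what turns the naive $O(|\j|N^3)$ per outcome into $O(N^5)$.
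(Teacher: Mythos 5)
Your proposal is correct and follows essentially the same accounting as the paper: the $O(N^5)$ term comes from the Kraus-sum computation of each $W_i$ using the Choi-rank bound $|\j|\le N^2$ and $O(N^3)$ matrix multiplication, the $O(2^{|\o|}|\o|N^2)$ term comes from the loop over subsets $A\subseteq\o$ with an $O(N^2)$ extreme-eigenpair extraction, and Algorithm~\ref{Algorithm} adds only a constant-time comparison. Your explicit caveat about justifying the $O(N^2)$ eigenvalue step via an iterative (power/Lanczos) method is a reasonable point of care that matches what the paper itself assumes and implements.
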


Like their classical counterparts, quantum machine learning  models usually downscale large-dimension input data to small-size outputs. This means that the number $|\o|$ of the measurement outcome set $\o$ is far smaller than the dimension $N$ of input Hilbert state space $\h$. It is even a constant 2 in most real-world tasks for binary decisions/classifications, such as income prediction and credit scoring (see the examples in  Section~\ref{Sec:Evaluation}), and in this case, the complexities of Algorithms~\ref{Algorithm:K} and \ref{Algorithm} are both $O(N^5)$. However, the dimension $N$ is exponential in the number $n$ of the input qubits, i.e., $N=2^n$. Thus the complexity turns out to be  $O(2^{5n})$. In verification of classical models, this \emph{state-space explosion problem}~\cite{baier2008principles} can be mitigated by using some custom-made data structures to capture the features of the underlying data, e.g. Binary Decision Diagrams (BDDs)~\cite{akers1978binary}. In the quantum case, we cross this difficulty by employing a quantum data structure --- \emph{Tensor Networks (TNs)}, originating from quantum many-body physics --- to exploit the locality and regularity of the circuits representing quantum machine learning models. As a result, quantum models with up to $n=27$ qubits can be handled by our verification algorithm. %To the best of our knowledge, this is the first time that TNs are applied to verify quantum machine learning models.

\section{Evaluation}\label{Sec:Evaluation}
%Anomaly detection by quantum decision models~\cite{herr2021anomaly}

In this section, we evaluate the efficiency of our verification algorithm~(Algorithm~\ref{Algorithm:K}) on noisy quantum decision models. The algorithm is implemented on \emph{TensorFlow Quantum}~\cite{broughton2020tensorflow} --- a platform of Google for designing and training quantum machine learning algorithms. Then we test it by verifying the fairness of two groups of examples:\begin{itemize}\item \textit{Small-scale models trained from real-world data}  (Subsection~\ref{subsec:financial}): There is still no public benchmarks for quantum decision models.  We choose two publicly available financial datasets, \emph{German Credit Data}~\cite{german_credit} and \emph{Adult Income Dataset} from \emph{Diverse Counterfactual Explanations Dataset}~\cite{dice2020} and train small-scale quantum models from them on TensorFlow Quantum.
Then we evaluate the Lipschitz constant $K^*$ of the trained models by Algorithm~\ref{Algorithm:K}.

%Our benchmarks include two classes: real-world financial quantum decision models and random. 
\item \textit{Medium-scale models} (Subsection~\ref{subsec:nisq}):  
Medium-scale models (10-30 qubits) are difficult to be trained on TensorFlow Quantum with a personal computer or a small server since the simulated quantum noises lead to large-size (up to $2^{30}\times 2^{30}$) matrix manipulations. Thus we turn to using a model from the tutorial of TensorFlow Quantum as a seed to generate a group of medium-scale models.
The efficiency of our algorithm is then demonstrated on these models with randomly sampled parameters.
\end{itemize}
All source codes can be found at: \url{https://github.com/Veri-Q/Fairness}.
All our experiments are carried out on a server with
Intel Xeon Platinum 8153 @ 2.00GHz $\times$ 256 Processors, 2048 GB Memory and no dedicated GPU. The machine runs Centos 7.7.1908 and each experiment is run with at most 80 processors. We use the \texttt{NumPy} and Google  \texttt{TensorNetwork}~\cite{roberts2019tensornetwork} Python packages to compute Lipschitz constants and bias kernels for small-scale models and medium-scale models, respectively. These two packages have their own advantages in different sizes. 

\subsection{A Practical  Application in Finance}\label{subsec:financial}

%\subsubsection{Datasets}

\begin{description}
    \item[Adult Income Dataset.] The original version of this dataset is extracted from the 1994 Census database by Barry Becker~\cite{adult_dataset}.
    We use the modified version of the adult income dataset by DiCE~\cite{dice2020}.
    Each individual in this modified dataset has $8$ features and the classification whether the income exceeds $\$50,000/\text{year}$ or not.
    We randomly select $1,000$ and $400$ data from the training dataset and test dataset contained in this modified dataset, respectively.
    The task of the quantum decision model task is to predict whether an individual’s income exceeds $\$50,000/\text{year}$ or not.
    \item[German Credit Dataset.] This dataset contains $1,000$ loan applicants with $20$ features and the classification whether they are considered as having good credit risk or not (Creditability).
    It provides $500$ applicants for the training and $500$ applicants for the test.
    By using the $p$-value with creditability for each variable~\cite{AnalysisGerman}, we have $9$ features (e.g., Account Balance, Payment Status) left as significant predictors.
    The task of the quantum model to be trained is to classify whether the person has good credit risk or not.
\end{description}

These datasets contain some categorical features, which are transformed into different integer numbers for further operations. Then we have $n\in\{8,9\}$ numbering features in total and use the following data-encoding feature map:
\[\vec{x} = (x_1,x_2,\ldots,x_n) \mapsto \ket{\psi(\vec{x})} =\bigotimes_{j=1}^nX^{x_j}\ket{0} \]
\text{for Pauli matrix $X$ defined in Example~\ref{Exa:noise}} to encode an $n$-dimensional feature vector $\vec{x}$ (each dimension is normalized by its maximum value) to an $n$-qubits quantum state $\psi(\vec{x}) = \ketbra{\psi(\vec{x})}{\psi(\vec{x})}$. 
\vspace{-0.4cm}
\subsubsection{Models:}
For the quantum decision model, we choose the basic rotation and entangling building blocks~\cite{doi:10.1126/sciadv.aaw9918} to construct  parameterized quantum circuits (see Fig.~\ref{fig:rotate_model}).
In the rotation block, without any ambiguity, we directly use $X$ and $Z$ to represent parameterized $X$-rotation $e^{-\imath\frac{\theta_1}{2}X}$ and parameterized $Z$-rotation $e^{-\imath\frac{\theta_2}{2}Z}$ on one qubit, respectively. It is worth noting that
the parameterized ($Z$-$X$-$Z$)-rotation induces universal gates on each qubit~\cite[Theorem 4.1]{nielsen2010quantum}, and thus the expressiveness of the models on one qubit is ensured.
In the entangling block, $XX$ stands for the parameterized ($X\otimes X$)-rotation $e^{-\imath\frac{\theta_3}{2}X\otimes X}$ on two qubits.
The entangling block can create entanglement between each qubit. Here entanglement is a unique feature of quantum models to express the interactions of qubits. The model is constructed by alternately using these two blocks with a quantum measurement $M$ at the end of the model. 
\begin{figure}[ht]
    \centering
    \vspace{-0.2cm}
    \input{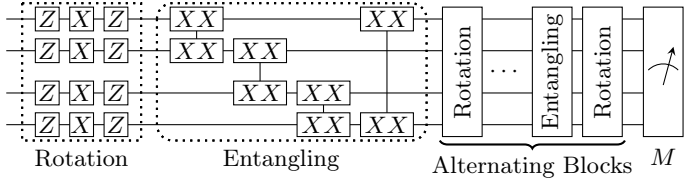}
    \vspace{-0.3cm}
    \caption{Parameterized Quantum Circuits for Quantum Finance Decision Models.}
    \label{fig:rotate_model}
    \vspace{-0.4cm}
\end{figure}

Since TensorFlow Quantum is inefficient in training noisy models, we only use $3$ rotation blocks and $2$ entangling blocks in the training models.
In addition, to simulate noisy models, we put different quantum noises introduced in Example~\ref{Exa:noise} on each qubit, including bit flip, phase flip, depolarizing, and the mixtures of them, behind the first rotation block.
Note that the number of qubits for the models is the same as the number of features of datasets due to the above choice of the data-encoding feature map.
The final measurement $M = \{M_{0} = I\otimes\ketbra{0}{0}, M_{1}=I\otimes\ketbra{1}{1}\}$ is a local measurement performed on the last qubit.
With the binary classification task, the loss function we choose is binary cross-entropy:
$-\frac{1}{N}\sum_{j=1}^N c_j \cdot \log \bar{c}_j +(1-c_j)\log(1-\bar{c}_j)),$
where $N$ is the size of the batch fixed in the training process, $c_j$ is the true label and $\bar{c}_j$ is the outcome of the measurement. All models are well trained and achieve around 70\% train and test accuracy (see Column ``Accuracy'' in Table~\ref{tab:financial}), matching that of the previously used classical and quantum finance decision models (e.g.~\cite{unknown,john2020verifying}). 

    \begin{table}[t]
        \centering
        \caption{Experimental results of Lipschitz constant $K^*$ of the  trained models.}\label{tab:financial}
        \scalebox{0.8}{\begin{tabular}{cccp{30pt}<{\centering}p{30pt}<{\centering}p{75pt}<{\centering}p{30pt}<{\centering}}
            \Xhline{1pt}
            \multirow{2}{*}{Dataset} & \multicolumn{2}{c}{Noise} & \multicolumn{2}{c}{Accuracy} & \multirow{2}{*}{$K^*$} & \multirow{2}{*}{Time} \\
            \cline{4-5} \cline{2-3}
            & type & probability & train & test \\
            \Xhline{0.6pt}
            \multirow{13}{*}{German Credit} & \multicolumn{2}{c}{None} & $0.732$ & $0.686$ & $1.0000 \times 10^{0\phantom{-} }$ & $\backslash$ \\
            \cline{2-7}
            & \multirow{3}{*}{Phase flip} & $10^{-4}$ & $0.726$ & $0.692$ & $9.9997 \times 10^{-1}$ & 2.36\tts \\
            & & $10^{-3}$  & $0.724$ & $0.714$ & $9.9800 \times 10^{-1}$ & 2.02\tts \\
            & & $10^{-2}$  & $0.704$ & $0.708$ & $9.6918 \times 10^{-1}$ & 1.94\tts\\
            \cline{2-7}
            & \multirow{3}{*}{Depolarizing} & $10^{-4}$ & $0.709$ & $0.686$ & $9.9977\times 10^{-1}$ &  2.77\tts\\
            & & $10^{-3}$  & $0.701$ & $0.712$ & $9.9789\times 10^{-1}$ & 2.93\tts \\
            & & $10^{-2}$  & $0.709$ & $0.682$ & $9.7916\times 10^{-1}$ & 3.44\tts\\
            \cline{2-7}
            & \multirow{3}{*}{Bit flip} & $10^{-4}$ & $0.712$ & $0.728$ & $9.9975\times 10^{-1}$ & 2.27\tts\\
            & & $10^{-3}$ & $0.710$ & $0.690$ & $9.9743\times 10^{-1}$ & 2.47\tts \\
            & & $10^{-2}$ & $0.724$ & $0.678$ & $9.7981\times 10^{-1}$ & 2.05\tts \\
            \cline{2-7}
            & \multirow{3}{*}{Mixed noise} & $10^{-4}$& $0.710$ & $0.704$ & $9.9980\times 10^{-1}$ & 2.15\tts\\
            & & $10^{-3}$ & $0.731$ & $0.682$ & $9.9834\times 10^{-1}$ & 2.08\tts \\
            & & $10^{-2}$ & $0.731$ & $0.692$ & $9.7021\times 10^{-1}$ & 1.95\tts\\
            \Xhline{0.6pt}
            \multirow{13}{*}{\makecell[c]{Adult Income \\ (DiCE)}} & \multicolumn{2}{c}{None} & $0.777$ & $0.770$ & $1.0000\times 10^{0\phantom{-}}$ & $\backslash$\\
            \cline{2-7}
            & \multirow{3}{*}{Phase flip} & $10^{-4}$ & $0.784$ & $0.767$ & $9.9992\times 10^{-1}$ & 0.44\tts\\
            & & $10^{-3}$ & $0.771$ & $0.770$ & $9.9805\times 10^{-1}$ & 0.51\tts\\
            & & $10^{-2}$ & $0.773$ & $0.767$ & $9.8057\times 10^{-1}$ & 0.48\tts\\
            \cline{2-7}
            & \multirow{3}{*}{Depolarizing} & $10^{-4}$ & $0.774$ & $0.767$ & $9.9987\times 10^{-1}$ & 0.57\tts\\
            & & $10^{-3}$ & $0.781$ & $0.767$ & $9.9867\times 10^{-1}$ & 0.58\tts \\
            & & $10^{-2}$ & $0.779$ & $0.767$ & $9.8667 \times 10^{-1}$ & 0.69\tts \\
            \cline{2-7}
            & \multirow{3}{*}{Bit flip} & $10^{-4}$ & $0.780$ & $0.767$ & $9.9980\times 10^{-1}$ & 0.57\tts\\
            & & $10^{-3}$ & $0.777$ & $0.767$ & $9.9800 \times 10^{-1}$ & 0.49\tts \\
            & & $10^{-2}$ & $0.778$ & $0.770$ & $9.8117 \times 10^{-1}$ & 0.54\tts \\
            \cline{2-7}
            & \multirow{3}{*}{Mixed noise} & $10^{-4}$ & $0.762$ & $0.720$ & $9.9987\times 10^{-1}$ & 0.68\tts \\
            & & $10^{-3}$ & $0.752$ & $0.720$ & $9.9812\times 10^{-1}$ & 0.67\tts \\
            & & $10^{-2}$ & $0.759$ & $0.720$ & $9.7647\times 10^{-1}$ & 0.67\tts \\
            \Xhline{0.6pt}
        \end{tabular}}
        \vspace{-0.6cm}
    \end{table}
    
\vspace{-0.4cm}
\subsubsection{Evaluation Details and Results:}
The results of evaluating Algorithm~\ref{Algorithm:K} on the models trained from different datasets and different quantum noises are presented in Table~\ref{tab:financial}.
For different datasets, we train noise-free models to serve as the baseline for training and test accuracy (see Row ``None'').
Furthermore, different types of noise are added with different levels of probabilities.
We list the Lipschitz constant $K^*$ and the running time of Algorithm~\ref{Algorithm:K} aided by \texttt{NumPy} for each column.
It can be seen that the higher level of noise's probability, the smaller value of constant $K^*$. Therefore,  the claim of quantum noise improving fairness  in~Section~\ref{sec:noise_fairness} is confirmed by the numerical results. This is also observed in Table~\ref{tab:qcnn} later.

\subsection{Scalability in the NISQ era}
\label{subsec:nisq}

\subsubsection{Models:}  
To reflect an actual application in the NISQ era, we choose not to randomly generate a parameterized quantum circuit model. Instead, we expanded the existing example of Quantum Convolutional Neural Network (QCNN)~\cite{cong2019quantum} in the QCNN tutorial\footnote{\url{https://tensorflow.google.cn/quantum/tutorials/qcnn}} of TensorFlow Quantum from $8$ qubits (see Fig.~\ref{fig:qcnn}) to $27$ qubits. In the experiment, we use the QCNN model with one convolution layer and one pooling layer.
The noise is applied between convolution and pooling layers on each qubit. The final measurement is $M = \{M_{0} = I\otimes\ketbra{0}{0}, M_{1}=I\otimes\ketbra{1}{1}\}$ performed on the last qubit with a gate $U$ appended before.
Since training a noisy model of this size is currently intractable on TensorFlow Quantum, the parameters in the model are all randomly sampled.

\begin{figure}[ht]
    \centering
    \vspace{-0.5cm}
    \input{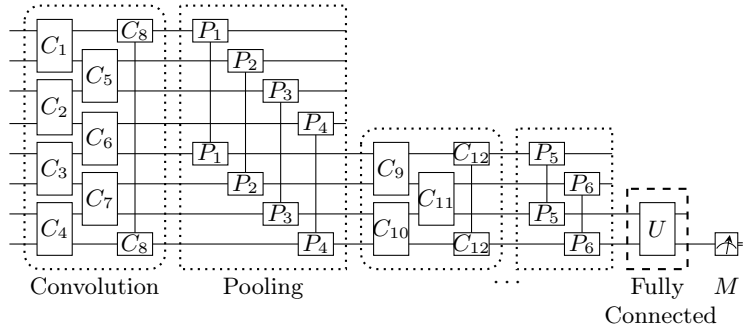}
    \vspace{-0.2cm}
    \caption{The QCNN model in the tutorial of TensorFlow Quantum. Each $C_i$ in the convolution layer is a parameterized 2-qubit gate to find a new state between adjacent qubits. Each $P_i$ in the pooling layer is also a parameterized 2-qubit gate with another form that attempts to extract the information of two qubits into a single qubit.}
    \label{fig:qcnn}
    \vspace{-0.4cm}
\end{figure}

\begin{table}[ht]
    \centering
    \caption{Experimental results of Lipschitz constant $K^*$ of QCNN models.}
    \label{tab:qcnn}
    \scalebox{0.8}{\begin{tabular}{p{36pt}<{\centering}ccp{36pt}<{\centering}p{28pt}<{\raggedleft}|p{36pt}<{\centering}p{28pt}<{\raggedleft}|p{36pt}<{\centering}p{28pt}<{\raggedleft}}
            \Xhline{1pt}
            \multirow{2}{*}{\#Qubits} & \multicolumn{2}{c}{Noise} & \multicolumn{2}{c|}{Evaluation I} & \multicolumn{2}{c|}{Evaluation II} & \multicolumn{2}{c}{Evaluation III}\\
            \cline{2-9}
            & type & probability & {$K^*$} & \multicolumn{1}{c|}{Time} & {$K^*$} & \multicolumn{1}{c|}{Time} & {$K^*$} & \multicolumn{1}{c}{Time}\\
            \Xhline{0.6pt}
            \multirow{13}{*}{$25$} & \multicolumn{2}{c}{None}  & $1.0000$ & \multicolumn{1}{c}{$\backslash$} & $1.0000$ & \multicolumn{1}{c}{$\backslash$} & $1.0000$ & \multicolumn{1}{c}{$\backslash$}\\
            \cline{2-9}
            & \multirow{3}{*}{Phase flip} & $10^{-4}$  & $0.9998$ & 2.15\ttm & $0.9997$ & 1.92\ttm & $0.9999$ & 2.12\ttm \\
            & & $10^{-3}$  & $0.9983$ & 1.71\ttm & $0.9982$ & 1.35\ttm & $0.9987$ & 1.10\ttm \\
            & & $10^{-2}$  & $0.9865$ & 1.75\tth & $0.9870$ & 54.49\ttm & $0.9831$ & 39.07\ttm \\
            \cline{2-9}
            & \multirow{3}{*}{Depolarizing} & $10^{-4}$ & $0.9998$ & 2.22\ttm & $0.9998$ & 1.59\ttm & $0.9998$ & 2.38\ttm\\
            & & $10^{-3}$ & $0.9985$ & 2.46\ttm & $0.9980$ & 1.62\ttm & $0.9982$ & 2.04\ttm\\
            & & $10^{-2}$ & $0.9824$ & 2.33\ttm & $0.9802$ & 2.53\ttm & $0.9809$ & 1.77\ttm \\
            \cline{2-9}
            & \multirow{3}{*}{Bit flip} & $10^{-4}$ & $0.9997$ & 1.74\ttm & $0.9998$ & 1.60\ttm & $0.9999$ & 2.15\ttm\\
            & & $10^{-3}$ & $0.9986$ & 2.44\ttm & $0.9980$ & 1.80\ttm & $0.9991$ & 2.37\ttm \\
            & & $10^{-2}$ & $0.9943$ & 1.78\tth & $0.9854$ & 20.78\ttm & $0.9919$ & 49.36\ttm\\
            \cline{2-9}
            & \multirow{3}{*}{Mixed noise} & $10^{-4}$ & $0.9998$ & 3.68\ttm & $0.9998$ & 1.34\ttm & $0.9998$ & 1.94\ttm \\
            & & $10^{-3}$ & $0.9980$ & 1.66\ttm & $0.9966$ & 2.06\ttm & $0.9983$ & 0.96\ttm\\
            & & $10^{-2}$ & $0.9901$ & 37.24\ttm & $0.9861$ & 1.95\tth & $0.9759$ & 6.03\ttm\\
            \Xhline{0.6pt}
            \multirow{13}{*}{$27$} & \multicolumn{2}{c}{None} & $1.0000$ & \multicolumn{1}{c}{$\backslash$} & $1.0000$ & \multicolumn{1}{c}{$\backslash$} & $1.0000$ & \multicolumn{1}{c}{$\backslash$} \\
            \cline{2-9}
            & \multirow{3}{*}{Phase flip} & $10^{-4}$ & $0.9999$ & 6.75\ttm & $0.9998$ & 7.34\ttm & $0.9998$ & 8.62\ttm\\
            & & $10^{-3}$ & $0.9980$ & 6.66\ttm & $0.9977$ & 9.55\ttm & $0.9981$ & 6.56\ttm\\
            & & $10^{-2}$ & $0.9896$ & 7.64\ttm & $0.9839$ & 54.12\ttm & $0.9709$ & 4.45\ttm \\
            \cline{2-9}
            & \multirow{3}{*}{Depolarizing} & $10^{-4}$ & $0.9998$ & 6.10\ttm & $0.9998$ & 6.89\ttm & $0.9998$ & 6.77\ttm \\
            & & $10^{-3}$ & $0.9981$ & 4.51\ttm & $0.9985$ & 5.34\ttm & $0.9978$ & 21.75\ttm \\
            & & $10^{-2}$ & $0.9809$ & 1.20\tth & $0.9767$ & 6.48\ttm & $0.9773$ & 8.48\ttm\\
            \cline{2-9}
            & \multirow{3}{*}{Bit flip} & $10^{-4}$ & $0.9998$ & 6.52\ttm & $0.9999$ & 5.39\ttm & $0.9999$ & 6.86\ttm\\
            & & $10^{-3}$ & $0.9986$ & 4.38\ttm & $0.9984$ & 7.96\ttm & $0.9971$ & 10.37\ttm \\
            & & $10^{-2}$ & $0.9917$ & 5.03\tth & $0.9894$ & 4.15\tth & $0.9854$ & 3.90\tth\\
            \cline{2-9}
            & \multirow{3}{*}{Mixed noise} & $10^{-4}$ & $0.9998$ & 6.67\ttm & $0.9998$ & 5.19\ttm & $0.9997$ & 10.39\ttm \\
            & & $10^{-3}$ & $0.9976$ & 7.06\ttm & $0.9976$ & 5.91\ttm & $0.9986$ & 6.62\ttm \\
            & & $10^{-2}$ & $ 0.9806$ & 7.70\ttm & $0.9850$ & 7.98\ttm & $0.9881$ & 6.02\tth\\
            \Xhline{0.6pt}
        \end{tabular}}
        \vspace{-0.6cm}
\end{table}

%\vspace{-0.8cm}
\subsubsection{Evaluation Details and Results:}
We choose the models with $25$ and $27$ qubits to run experiments.
Since the parameters are randomly sampled, for each noise with different levels of probability, we generate the model and evaluate the Lipschitz constant $K^*$ for $3$ times.
However, because a $2^{25}\times 2^{25}$ or $2^{27}\times 2^{27}$ complex matrix consumes a huge amount of memory, it is not feasible to directly use Algorithm~\ref{Algorithm:K} as the previous experiment, where we represent the $M_A$ in Algorithm~\ref{Algorithm:K} as a matrix and use the package \texttt{NumPy} to evaluate eigenvalue.
We instead use a tensor network~\cite{bridgeman2017tensor} to represent the $M_A$ and
the subroutine of evaluating eigenvalue in Algorithm~\ref{Algorithm:K} is implemented with the basic power method for eigenvalue problem~\cite{doi:10.1137/1.9780898719581} by using \texttt{TensorNetwork} package.
Although there are some packages for sparse matrix in Python that can collaborate with \texttt{TensorNetwork}, their implementation for computing eigenvalues still consumes a huge amount of memory. The evaluation results on QCNN models with randomly sampled parameters and different quantum noises are listed in Table~\ref{tab:qcnn}. These results prove that our fairness verification
algorithm is efficient and can handle $27$-qubit quantum decision models on a small server. For further exploring the scalability of our verification algorithm, we also test on 29-qubit QCNN models; Please see Appendix~\ref{sec:more_experiments} for the results. 

Last but not least, it is worth noting that in all experiments, we also obtain bias kernels by Algorithm~\ref{Algorithm:K} at the running time presented in Tables.~\ref{tab:financial} and \ref{tab:qcnn}, but as they are large-size (up to $2^{27}$-dimensional) vectors, we do not show them. 

\section{Conclusion}
In this work, we initiate the studies on algorithmic verification of fairness of quantum machine learning decision models. In particular, we showed that this verification problem can be reduced to computing the Lipschitz constant of the decision models, and then resolved the latter by introducing and estimating single measurement distinguishability. Based on these theoretical results, we developed an algorithm that can verify the $(\epsilon,\delta)$-fairness of quantum decision models and provides useful bias kernels for explaining the unfairness of the models.

An interesting topic for future research is how to improve the results presented in this paper for  
training quantum decision models with fairness guarantee. On the other hand, further investigations are required to better understand the bias kernels detected by our verification algorithm, especially through more experiments on real-world applications.  
\section*{Acknowledgments}
Ji Guan would like to thank Jiayi Chen for her linguistic assistance during the preparation of this paper.
This work was partly supported by the National Key R\&D Program of China (Grant No: 2018YFA0306701), the National Natural Science Foundation of China (Grant No: 61832015). 
\bibliographystyle{unsrt} 
\bibliography{main}

\clearpage

\appendix
\section*{Appendix}
\section{Proof of Lemma~\ref{lem:Lipschitz}}\label{sec:proof_lip}
\begin{proof}
This mainly results from~\cite[Theorem 9.1]{nielsen2010quantum}: for any $\rho,\sigma \in\dh$ 
\[\tr(|\rho-\sigma|)=\max_{\{M_i\}_i}\sum_{i}|\tr(M_i^\dagger M_i(\rho-\sigma) )|\]
where the maximization is over all quantum measurements $\{M_i\}_i$. We complete the proof by noting that $\{\sqrt{\e^\dagger(M_i^\dagger M_i)}\}_i$ is a measurement and 
\[\tr(M_i^\dagger M_i\e(\rho-\sigma) )=\tr(\e^\dagger(M_i^\dagger M_i)(\rho-\sigma) ),\]
where $\e^\dagger$ is the conjugate map of $\e$.
\hfill $\Box$
\end{proof}

\section{Proof of Theorem~\ref{Thm:fairness_verification}}\label{sec:proof_verification}
\begin{proof}
    We first prove the first claim. 
    The ``if'' direction can be derived by the  definition of $(\epsilon,\delta)$-fairness, $K^*$ and Eq.(\ref{Eq:Lipschitz}) in the paper. Then we prove ``only if'' direction. By Theorem~\ref{Thm:main} in the paper, we have that there exists a pair of mutually orthogonal pure quantum states $\ket{\psi}$ and $\ket{\phi}$ such that
   \[
      d(\a(\psi),\a(\phi))=K^* D(\psi,\phi).
    \]
    Note that $D(\psi,\phi)=1$ results from  the orthogonality of $\psi$ and $\phi$, i.e., $\braket{\psi}{\phi}=0$. Given a quantum state $\sigma$, let $$\rho_{\psi}=\epsilon\psi+(1-\epsilon)\sigma\qquad \rho_{\phi}=\epsilon\phi+(1-\epsilon)\sigma.$$
    Then $D(\rho_{\psi},\rho_\phi)=\epsilon$. Subsequently,  
    \[d(\a(\rho_\psi),\a(\rho_\phi))=\epsilon d(\a(\psi),\a(\phi))=\epsilon K^*D(\psi,\phi)=\epsilon K^*\leq \delta.\]
    The above inequality follows from the definition of $(\epsilon,\delta)$-fairness of $\a$.
    
    Next, we prove the second claim.
    If the $(\epsilon,\delta)$-fairness fails, i.e., $\delta <K^*\epsilon$, then for any quantum state $\sigma$, $D(\rho_\psi ,\rho_\phi)=\epsilon$ and 
\[d(\a(\rho_\psi),\a(\rho_\phi))=K^* \epsilon D(\psi,\phi)=K^* \epsilon>\delta.\]
So $(\rho_\psi,\rho_\phi)$ is a bias pair by Definition~\ref{def:bias} in the paper.
\hfill $\Box$
\end{proof}

\section{Single Quantum  Measurement Distinguishability}\label{Measurement}
In this section, we aim to prove Theorem~\ref{Thm:main} in the paper by reducing it to a problem, called \emph{single quantum  measurement distinguishability}.

\begin{table}[ht]
\centering
\caption{Quantum Distinguishibability Problems Summary}
\begin{tabular}{|c|p{70pt}<{\centering}|p{70pt}<{\centering}|p{70pt}<{\centering}|p{70pt}<{\centering}|}
%\hline
%\multicolumn{5}{|c|}{Quantum Distinguishibability Problems}\\ 
\hline
\diagbox[innerwidth=1.8cm]{Scenario}{Types}&States&Super-operators& \makecell[c]{Multiple \\ Measurements} & \makecell[c]{Single \\ Measurement} \\
\hline
Prior Known&two states & \makecell[c]{two \\ super-operators} & \makecell[c]{two \\ measurements} & \makecell[c]{one \\ measurement}\\
\hline
Aims & \makecell[c]{one \\ measurement} & one protocol& one state& two states\\
\hline
\end{tabular}
\label{table:distinguishability}
\end{table}
The distinguishability of quantum data and computation models is a fundamental problem in the field of quantum
information and has been intensively studied in the last two decades. For two prior known quantum states (super-operators), \emph{quantum states (super-operators) distinguishability} mainly considers to find an optimal measurement (protocol) to determine which one is given  through (a sequence of) measurement  outcomes~\cite{nielsen2010quantum,rudolph2003unambiguous,duan2007entanglement,duan2009perfect}. Besides this, \emph{multiple quantum measurements distinguishability}~\cite{ji2006identification,datta2021perfect} was also studied to find an optimal quantum state to distinguish two prior known quantum measurements in a similar scenario. See Table~\ref{table:distinguishability} for the summary.
A  fundamental step of these distinguishability problems is to compute the maximum probability $\pr(\{M_i\}_{i\in\o}|\rho,\sigma)$ of discriminating two quantum states $\rho$ and $\sigma$  by  measurement $\{M_i\}_{i\in\o}$. More specifically, suppose that Alice is randomly given a state $X$ from two prior known quantum states $\{\rho,\sigma\}$, and she wishes to use a quantum measurement $\{M_{i}\}_{i\in \o}$ to guess whether $X$ is state $\rho$ or state $\sigma$. 
The best strategy (successful probability) is as follows:
\begin{equation*}
\begin{aligned}
	\pr(\{M_i\}_{i\in\o}|\rho,\sigma)&=\frac{1}{2}\pr(i\in A\mid X=\rho)+\frac{1}{2}\pr(i\not\in A\mid X=\sigma)\\
 		&=\frac{1}{2}\sum_{i\in A}\tr(\m_i\rho)+\frac{1}{2}\sum_{i\not\in A}\tr(\m_i\sigma)\\
 		&=\frac{1}{2}\sum_{i\in \o}\max\{\tr(\m_i\rho), \tr(\m_i\sigma)\}\\
 		&=\frac{1}{2}\sum_{i\in\o}[\frac{\tr(\m_i(\rho+\sigma))}{2}+\frac{|\tr(\m_i(\rho-\sigma))|}{2}]\\
 		&=\frac{1}{2}+\frac{1}{4}\sum_{i\in\o}|\tr(\m_i(\rho-\sigma))|
\end{aligned}	
\end{equation*} 
where $\m_i=M_i^\dagger M_i$ for all $i\in \o$, and  $A=\{i\in\o\mid \tr(\m_i\rho)\geq \tr(\m_i\sigma)\}$ records the set of outcomes that the measurement probability of $\rho$ is higher than or equal to that of $\sigma$. In the context of  quantum distinguishability, as we only consider the probabilities of measurement outcomes, in the following, we use \emph{positive operator-valued measurement (POVM)}  $\{\m_i=M_i^\dagger M_i\}_{i\in\o}$ to replace  general measurement $\{M_i\}_{i\in\o}$. A POVM is used when we do not care the post-measurement states~\cite[Section 2.2.6]{nielsen2010quantum}.

In the context of quantum states distinguishability, Alice further tries to select a different measurement to optimize the probability $\pr(\{M_i\}_{i\in\o}|\rho,\sigma)$, i.e., 
\[\max_{\{M_i\}_{i\in\o}}\pr(\{M_i\}_{i\in\o}|\rho,\sigma)=\frac{1}{2}+\frac{1}{4}\max_{\{M_i\}_{i\in\o}}\sum_{i\in\o}|\tr(\m_i(\rho-\sigma))|=\frac{1}{2}+\frac{1}{2}D(\rho,\sigma).\]
The last equation results from~\cite[Theorem 9.1]{nielsen2010quantum}. Thus, the trace distance of $\rho$ and $\sigma$ represents quantum states distinguishability.

In this paper, we treat $\rho$ and $\sigma$ as variables to maximize $\pr(\{M_i\}_{i\in\o}|\rho,\sigma)$ which is the problem of \emph{single quantum  measurement distinguishability} (See the last column of Table~\ref{table:distinguishability}). More specifically, Alice seeks to use measurement $\{M_{i}\}_{i\in \o}$ to maximize the probability that her guess is correct by selecting  different candidate pair of quantum states $(\rho,\sigma)$. That is 
\[\max_{\rho,\sigma\in\dh}\pr(\{M_i\}_{i\in\o}|\rho,\sigma)=\frac{1}{2}+\frac{1}{4}\max_{\rho,\sigma\in\dh}\sum_{i\in\o}|\tr(\m_i(\rho-\sigma))|.\] 
As we can see, $\frac{1}{2}\max_{\rho,\sigma\in\dh}\sum_{i\in\o}|\tr(\m_i(\rho-\sigma))|$ characterizes single quantum measurement distingishability. Interestingly, it is the same to the Lipschitz constant $K^*$ of quantum decision model $\a=(\e,\{M_i\}_{i\in\o})$ when $\e=\id_\h$,  the identity super-operator on $\h$: $\id_\h(\rho)=\rho$ for all $\rho\in\dh$.

%which is, for a prior known quantum measurement $\{M_i\}_{i\in\o}$, to find an optimal pair of quantum states $(\rho,\sigma)$ to maximize the probability $\pr(\{M_i\}_{i\in\o}|\rho,\sigma)$ of discriminating them by the measurement. In other words, 

%Next, we build  a relationship between the Lipschitz constant $K^*$ of quantum decision model $\a=(\e,\{M_i\}_{i\in\o})$ and  the distinguishability of POVM measurement $\{\e^\dagger(\m_i)\}_{i\in\o}$ with $\m_i=M_i^\dagger M_i$.

\begin{lemma}\label{lem:relationship}
Let $\a=(\e,\{M_i\}_{i\in\o})$ be a quantum decision model and $K^*$ the Lipschitz constant of $\a$. Then 
	\[2K^*=\max_{\rho,\sigma\in\dh}\sum_{i\in \o}|\tr(\e^\dagger(\m_i)(\rho-\sigma))|=\max_{\rho\perp\sigma}\sum_{i\in\o}|\tr(\e^\dagger(\m_i)(\rho-\sigma))|,\]
	where $\m_i=M_i^\dagger M_i$ for all $i\in\o$ and $\rho\perp\sigma$ means that  $\rho$ is orthogonal to $\sigma$, i.e., $\tr(\rho\sigma)=0$.
\end{lemma}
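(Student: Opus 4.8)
The plan is to express $K^*$ as an extremal ratio and then solve the extremization by convex analysis. Recall from Lemma~\ref{lem:Lipschitz} and the formula for $d(\a(\rho),\a(\sigma))$ that, writing $\m_i=M_i^\dagger M_i$ and using the adjoint identity $\tr(\m_i\e(\rho-\sigma))=\tr(\e^\dagger(\m_i)(\rho-\sigma))$, we have $d(\a(\rho),\a(\sigma))=\tfrac12\sum_{i\in\o}|\tr(\e^\dagger(\m_i)(\rho-\sigma))|$ while $D(\rho,\sigma)=\tfrac12\tr|\rho-\sigma|$. By definition the best Lipschitz constant is the supremum of the ratio, so
\[K^*=\sup_{\rho\neq\sigma}\frac{\sum_{i\in\o}|\tr(\e^\dagger(\m_i)(\rho-\sigma))|}{\tr|\rho-\sigma|}.\]
Both numerator and denominator depend on $\rho,\sigma$ only through $\Delta=\rho-\sigma$, which ranges exactly over the set $C$ of Hermitian, trace-zero matrices with $\tr|\Delta|\le 2$: one inclusion is immediate from $\tr|\rho-\sigma|\le\tr\rho+\tr\sigma=2$, and for the reverse I would split $\Delta=\Delta_+-\Delta_-$ into positive and negative parts (with equal traces $t\le1$ because $\tr\Delta=0$) and write $\rho=\Delta_++\tau$, $\sigma=\Delta_-+\tau$ for a common density $\tau$ with $\tr\tau=1-t$.

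Since the ratio is invariant under positive scaling of $\Delta$, I would reduce to maximizing the functional $f(\Delta)=\sum_{i\in\o}|\tr(\e^\dagger(\m_i)\Delta)|$ over $C$, obtaining $2K^*=\max_{\Delta\in C}f(\Delta)$ (equivalently the maximum over the boundary $\tr|\Delta|=2$, since $f$ is nonnegative and positively homogeneous). The key step is to locate where this maximum is attained. The set $C$ is compact and convex, and $f$ is convex (a sum of absolute values of the real-linear functionals $\Delta\mapsto\tr(\e^\dagger(\m_i)\Delta)$), so the maximum is attained at an extreme point of $C$. I would then show that the extreme points of $C$ are precisely the differences $\psi-\phi$ of orthogonal pure states ($\braket{\psi}{\phi}=0$): given any $\Delta\in C$ possessing two eigenvalues of the same sign, the traceless, norm-preserving perturbation $\Delta\mapsto\Delta\pm t(\ket{e_1}\bra{e_1}-\ket{e_2}\bra{e_2})$ (for small $t$ and the corresponding eigenvectors) exhibits $\Delta$ as the midpoint of two points of $C$, so an extreme point has at most one positive and one negative eigenvalue; tracelessness forces equal magnitudes, and the norm constraint together with the convex combination $\Delta=\lambda(\Delta/\lambda)+(1-\lambda)\,\mathbf{0}$ forces that magnitude to be $1$.

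Putting these together, the maximum of $f$ over $C$ is realized by some $\Delta^\ast=\psi-\phi$ with $\psi\perp\phi$, for which $\tr|\Delta^\ast|=2$; this simultaneously gives $2K^*=\max_{\rho,\sigma\in\dh}\sum_{i}|\tr(\e^\dagger(\m_i)(\rho-\sigma))|$ and, since $\psi\perp\phi$ means $\tr(\psi\phi)=0$, that the same value is already attained under the stronger constraint $\rho\perp\sigma$; as $\max_{\rho\perp\sigma}\le\max_{\rho,\sigma}$ is trivial, the second equality closes. The main obstacle is the extreme-point characterization: I must rule out higher-rank extreme points and pin the two nonzero eigenvalues to $\pm1$, which is exactly the fact that makes orthogonal pure states optimal and thereby seeds the bias kernel of Theorem~\ref{Thm:fairness_verification}. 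As forward-looking context, once the optimum is known to sit at orthogonal pure states, the identity $\sum_{i}\e^\dagger(\m_i)=\e^\dagger(I)=I$ (trace preservation of $\e$) lets one rewrite $f$ through a single operator $M_A=\sum_{i\in A}\e^\dagger(\m_i)$, which is the bridge to the eigenvalue characterization established subsequently.
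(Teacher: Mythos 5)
Your proof is correct, but it takes a genuinely different route from the paper's. The paper argues directly and computationally: it writes $K^*$ as the extremal ratio, applies the Jordan decomposition $\rho-\sigma=\Delta_+-\Delta_-$, uses $\tr(\Delta_+)=\tr(\Delta_-)\le 1$ to normalize the two parts into a pair of orthogonal density matrices, and reads off both inequalities from that single rescaling --- no convexity machinery is needed. You instead package the differences $\rho-\sigma$ as the trace-norm ball $C$ of traceless Hermitian matrices, invoke the Bauer maximum principle for the convex functional $f$, and characterize the extreme points of $C$ as $\psi-\phi$ with $\ket{\psi}\perp\ket{\phi}$ via the eigenvalue-perturbation argument. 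Your approach carries a heavier burden at the extreme-point step (your sketch of it is essentially complete and standard, so this is not a gap), but it buys strictly more: it locates the optimum at orthogonal \emph{pure} states, which is precisely the content of the paper's subsequent Lemma~\ref{lem:pure_state} (proved there separately by a convex decomposition of $\rho-\sigma$ into $\sum_j p_j(\psi_j-\phi_j)$ plus the triangle inequality). So your single argument subsumes both Lemma~\ref{lem:relationship} and Lemma~\ref{lem:pure_state}, whereas the paper's proof of this lemma is more elementary but only reaches orthogonal mixed states. Your closing remark that $\sum_{i}\e^\dagger(\m_i)=\e^\dagger(I)=I$ (unitality of the adjoint of a trace-preserving map) is also exactly the identity the paper later exploits, in the form $\m_A=I-\m_{\o\setminus A}$, to get the eigenvalue formula of Theorem~\ref{thm:K*}.
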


\begin{proof}
First of all, by Eq.~(\ref{Eq:Lipschitz}) in the paper,
\[K^*=\max_{\rho,\sigma\in \dh}\frac{d(\a(\rho),\a(\sigma))}{D(\rho,\sigma)}= \max_{\rho,\sigma\in \dh}\frac{\sum_{i\in\o}|\tr(\e^\dagger(\m_i)(\rho-\sigma))|}{\tr(|\rho-\sigma|).}\]

	For any $\rho,\sigma \in\dh$, let $\rho-\sigma=\Delta_{+}-\Delta_{-}$ be a decomposition into orthogonal positive and negative parts (i.e., $\Delta_{\pm}\geq 0$ and $\Delta_{+}\Delta_{-}=0$). Then we have $\tr(\Delta_+)=\tr(\Delta_{-})$ and $\tr(|\rho-\sigma|)=2\tr(\Delta_{+})$. Subsequently, 
	\begin{equation*}
		\begin{aligned}
			\frac{\sum_{i\in\o}|\tr(\e^\dagger(\m_i)(\rho-\sigma))|}{\tr(|\rho-\sigma|)}&=\frac{\sum_{i\in\o}|\tr(\e^\dagger(\m_i)(\Delta_{+}-\Delta_{-}))|}{2\tr(\Delta_{+})}\\
			&=\frac{1}{2}\sum_{i\in\o}\mid\tr(\e^\dagger(\m_i)\frac{\Delta_{+}}{\tr(\Delta_{+})}-\e^\dagger(\m_i)\frac{\Delta_{-}}{\tr(\Delta_{-})})\mid.
		\end{aligned}
	\end{equation*}
	As $\frac{\Delta_{\pm}}{\tr(\Delta_{\pm})}\in \dh$ are two orthogonal quantum states, we can obtain that
		\[2K^*=\max_{\rho\perp\sigma}\sum_{i\in\o}|\tr(\e^\dagger(\m_i)(\rho-\sigma))|.\]

		Next, we prove 
		\[\max_{\rho,\sigma\in\dh}\sum_{i\in\o}|\tr(\e^\dagger(\m_i)(\rho-\sigma))|=\max_{\rho\perp\sigma}\sum_{i\in\o}|\tr(\e^\dagger(\m_i)(\rho-\sigma))|.\]
		The r.h.s is certainly a lower bound of the l.h.s since any two orthogonal quantum  states are in $\dh$. So we have to show that it is also an upper bound. Again, for any $\rho,\sigma \in\dh$, let $\rho-\sigma=\Delta_{+}-\Delta_{-}$ be a decomposition into orthogonal positive and negative parts. Then we have  
		\[\sum_{i\in\o}|\tr(\e^\dagger(\m_i)(\rho-\sigma))|=\tr(\Delta_+)\sum_{i\in\o}\mid\tr(\e^\dagger(\m_i)\frac{\Delta_{+}}{\tr(\Delta_{+})}-\e^\dagger(\m_i)\frac{\Delta_{-}}{\tr(\Delta_{-})})\mid.\]
		As trace distance is in $[0,1]$ and  $D(\rho,\sigma)=\tr(\Delta_+)$, $\tr(\Delta_+)\leq 1$. Therefore,
		\[\sum_{i\in\o}|\tr(\e^\dagger(\m_i)(\rho-\sigma))|\leq \sum_{i\in\o}\mid\tr(\e^\dagger(\m_i)\frac{\Delta_{+}}{\tr(\Delta_{+})}-\e^\dagger(\m_i)\frac{\Delta_{-}}{\tr(\Delta_{-})})\mid.\]
		We complete the proof by noting that  $\frac{\Delta_{\pm}}{\tr(\Delta_{\pm})}\in \dh$ are two orthogonal quantum states.
		\hfill $\Box$
\end{proof}

By the above lemma, we can see that the distinguishability of a POVM  $\{\e^\dagger(\m_i)\}_{i\in \o}$ is $\frac{1}{2}(1+K^*)$. Furthermore, it is worth noting that  $\{\e^\dagger(\m_i)\}$ can be any POVM as $\e$ can be the identity super-operator $\id_\h$ on $\h$. Thus $K^*$ represents the distinguishability  of quantum POVM  and  mathematically can be redefined as a mapping:
\[K^*:\mm\rightarrow[0,1] \qquad K^*(\{\m_i\}_{i\in\o})=\max_{\rho,\sigma\in\dh}\frac{1}{2}\sum_{i\in\o}|\tr(\m_i(\rho-\sigma))|\]
where $\mm$ denotes the set of all POVMs on $\h$.

Then, with the observation that $K^*(\{\m_i\}_{i\in\o})=K^*(\{U\m_iU^
\dagger\}_{i\in\o})$ for all unitary matrix $U$, we can restate Theorem~\ref{lem:noise_increase_fairness} in the paper in the context of single quantum measurement distinguishability as follows.
 \begin{theorem}\label{lem:composition}
 For any POVM $\{\m_{i}\}_{i\in\o}$ and super-operator $\e$, we have 
 \[K^*(\{\m_{i}\}_{i\in\o})\geq K^*(\{\e^
 \dagger(\m_{i})\}_{i\in\o}).\]
\end{theorem}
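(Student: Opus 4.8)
The plan is to rewrite the distinguishability functional for the pulled-back POVM $\{\e^\dagger(\m_i)\}_{i\in\o}$ so that the super-operator $\e$ acts on the pair of states rather than on the measurement, and then to exploit the fact that $\e$ sends density operators to density operators. Concretely, I would start from the defining expression
\[K^*(\{\e^\dagger(\m_i)\}_{i\in\o})=\max_{\rho,\sigma\in\dh}\frac{1}{2}\sum_{i\in\o}|\tr(\e^\dagger(\m_i)(\rho-\sigma))|,\]
and apply the adjoint relation $\tr(\e^\dagger(\m_i)(\rho-\sigma))=\tr(\m_i\,\e(\rho-\sigma))$ (the same identity used in Appendix~\ref{sec:proof_lip}) together with linearity of $\e$ to move $\e$ off the measurement and onto the argument:
\[K^*(\{\e^\dagger(\m_i)\}_{i\in\o})=\max_{\rho,\sigma\in\dh}\frac{1}{2}\sum_{i\in\o}|\tr(\m_i(\e(\rho)-\e(\sigma)))|.\]

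Next I would invoke the fact that $\e$ is a legitimate super-operator: being completely positive and trace-preserving, it satisfies $\e(\rho)\in\dh$ whenever $\rho\in\dh$. Hence, as $(\rho,\sigma)$ ranges over $\dh\times\dh$, the pair $(\e(\rho),\e(\sigma))$ ranges over a subset of $\dh\times\dh$, namely $\{\e(\rho):\rho\in\dh\}^{2}\subseteq\dh\times\dh$. Since the objective $\frac{1}{2}\sum_{i}|\tr(\m_i(\cdot-\cdot))|$ is the very functional whose maximum over all of $\dh\times\dh$ equals $K^*(\{\m_i\}_{i\in\o})$, restricting the feasible region can only decrease the maximum, giving
\[\max_{\rho,\sigma\in\dh}\frac{1}{2}\sum_{i\in\o}|\tr(\m_i(\e(\rho)-\e(\sigma)))|\leq \max_{\rho',\sigma'\in\dh}\frac{1}{2}\sum_{i\in\o}|\tr(\m_i(\rho'-\sigma'))|=K^*(\{\m_i\}_{i\in\o}),\]
which is exactly the claimed inequality.

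The argument is, at heart, the range-inclusion (monotonicity of a maximum under restriction of its domain) observation already announced in the outline of Theorem~\ref{lem:noise_increase_fairness}; there is no genuine obstacle, only two points that require care. The first is getting the adjoint identity right against the Kraus form $\e(\rho)=\sum_j E_j\rho E_j^\dagger$, which gives $\e^\dagger(\rho)=\sum_j E_j^\dagger\rho E_j$ and hence the trace identity above; the second is confirming that trace preservation and positivity of $\e$ together guarantee $\e(\rho)\in\dh$. Both are immediate from the definitions recalled in Section~\ref{sec:preliminary}. Finally, to recover the original statement of Theorem~\ref{lem:noise_increase_fairness} from this restated form, one additionally uses the unitary invariance $K^*(\{\m_i\}_{i\in\o})=K^*(\{U\m_iU^\dagger\}_{i\in\o})$ noted just before the statement, applied with $U$ the unitary matrix of $\u$, so that the $\u$ in $\a_\e=(\e\circ\u,\{\m_i\}_{i\in\o})$ can be absorbed without changing the Lipschitz constant.
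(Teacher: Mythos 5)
Your proposal is correct and follows exactly the paper's own argument: it rewrites $\tr(\e^\dagger(\m_i)(\rho-\sigma))$ as $\tr(\m_i(\e(\rho)-\e(\sigma)))$ via the adjoint identity and then bounds the maximum using the range inclusion $\{\e(\rho):\rho\in\dh\}\subseteq\dh$. No gaps; this matches the proof in Appendix~\ref{Measurement} step for step.
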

 \begin{proof} By definition of $K^*(\cdot)$, we have 
 \begin{equation*}
		\begin{aligned}
			K^*(\{\e^
 \dagger(\m_{i})\}_{i\in\o})&=\max_{\rho,\sigma\in\dh}\frac{1}{2}\sum_{i\in\o}|\tr(\e^\dagger(\m_i)(\rho-\sigma))|\\
 &=\max_{\rho,\sigma\in\dh}\frac{1}{2}\sum_{i\in\o}|\tr(\m_i(\e(\rho)-\e(\sigma)))|\\
 &\leq \max_{\rho,\sigma\in\dh}\frac{1}{2}\sum_{i\in\o}|\tr(\m_i(\rho-\sigma))|\\
 &=K^*(\{\m_{i}\}_{i\in\o}).
		\end{aligned}
	\end{equation*}
The above inequality comes from $\{\e(\rho):\rho\in\dh\}\subseteq \dh$.
 \hfill $\Box$
 \end{proof}
% The above can be directly applied to prove Theorem~\ref{lem:noise_increase_fairness}.

In the following, for completing the proof of Theorem~\ref{Thm:main} in the paper, we show how to compute $K^*(\{\m_i\}_{i\in\o})$ for any POVM $\{\m_i\}_{i\in\o}$.

First, we observe that the optimization problem of computing $K^*(\{\m_i\}_{i\in\o})$  in Lemma~\ref{lem:relationship} can be constrained in pure states instead of searching mixed states.

\begin{lemma}\label{lem:pure_state}
 	\[2K^*(\{\m_i\}_{i\in\o})=\max_{\ket{\psi}\perp\ket{\phi}}\sum_{i\in\o}|\tr(\m_i(\psi-\phi))|,\]
 	where  $\ket{\psi}\perp\ket{\phi}$ means that  $\ket{\psi}$ and $\ket{\phi}$ are mutually orthogonal, i.e., $\braket{\psi}{\phi}=0$, and  $\psi=\ketbra{\psi}{\psi}$ and $
 \phi=\ketbra{\phi}{\phi}$.
 \end{lemma}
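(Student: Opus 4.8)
The plan is to reduce to the mixed-state characterization already in hand. Applying Lemma~\ref{lem:relationship} with the identity super-operator $\e=\id_\h$ (so that $\e^\dagger(\m_i)=\m_i$) gives
\[2K^*(\{\m_i\}_{i\in\o}) = \max_{\rho\perp\sigma}\sum_{i\in\o}|\tr(\m_i(\rho-\sigma))|,\]
where the maximum ranges over all pairs of \emph{orthogonal mixed states}. Writing $g(\rho,\sigma)=\sum_{i\in\o}|\tr(\m_i(\rho-\sigma))|$, the inequality $\max_{\ket{\psi}\perp\ket{\phi}}g(\psi,\phi)\le 2K^*(\{\m_i\}_{i\in\o})$ is immediate, since orthogonal pure states form a subset of orthogonal states. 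The entire content of the lemma is therefore the reverse inequality: I must promote an arbitrary orthogonal mixed-state optimizer to a pair of orthogonal \emph{pure} states without decreasing the objective.

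The key structural fact I would record first is convexity: for each fixed $\sigma$, the map $\rho\mapsto g(\rho,\sigma)$ is convex, being a finite sum of absolute values of functionals affine in $\rho$ (and symmetrically in the second argument). Moreover, the feasible set for the first argument, $\{\rho\geq 0:\tr(\rho)=1,\ \tr(\rho\sigma)=0\}$, is exactly the set of density operators supported on $\ker(\sigma)$; it is convex and compact, and its extreme points are precisely the rank-one projections $\ketbra{\psi}{\psi}$ onto unit vectors $\ket{\psi}\in\ker(\sigma)$.

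The main step is then an alternating two-stage optimization. Let $(\rho^*,\sigma^*)$ attain the mixed-state maximum $V\coloneqq 2K^*(\{\m_i\}_{i\in\o})$. Fixing $\sigma^*$ and maximizing the convex function $g(\functiondot,\sigma^*)$ over the convex compact set of density operators orthogonal to $\sigma^*$, the maximum is attained at an extreme point, i.e.\ a pure state $\psi$ with $\tr(\psi\sigma^*)=0$ and $g(\psi,\sigma^*)\ge g(\rho^*,\sigma^*)=V$. Holding this pure $\psi$ fixed, I then maximize $g(\psi,\functiondot)$ over density operators orthogonal to $\psi$; this feasible set contains $\sigma^*$ precisely because $\tr(\psi\sigma^*)=0$, so the maximum is at least $g(\psi,\sigma^*)\ge V$ and, by the same convexity argument, is attained at an extreme point $\phi$ with $\ket{\phi}\perp\ket{\psi}$. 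This yields mutually orthogonal pure states $\psi,\phi$ with $g(\psi,\phi)\ge V$, establishing $\max_{\ket{\psi}\perp\ket{\phi}}g(\psi,\phi)\ge 2K^*(\{\m_i\}_{i\in\o})$ and hence the claimed equality.

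I do not anticipate a deep obstacle; the one point demanding care is the orthogonality bookkeeping across the two stages --- one must verify that the pure state $\psi$ produced in the first stage stays orthogonal to the mixed state $\sigma^*$ carried into the second, so that $\sigma^*$ remains feasible and the objective cannot drop. The convex-analytic ingredient (a convex function on a compact convex set attains its maximum at an extreme point, together with the identification of the extreme points of the set of density operators on a subspace) is standard and may be cited or dispatched in a line.
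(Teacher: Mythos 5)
Your proof is correct, and it arrives at the lemma by a route that is organized differently from the paper's even though both rest on the same convexity fact. The paper takes an orthogonal pair $(\rho,\sigma)$, exhibits an explicit convex decomposition $\rho-\sigma=\sum_j p_j(\psi_j-\phi_j)$ into differences of mutually orthogonal pure states (obtainable from the spectral decompositions of $\rho$ and $\sigma$, whose supports are orthogonal), and applies the triangle inequality once to bound the objective by its value at the best pure pair. You instead invoke, twice and in alternation, the principle that a convex function on a compact convex set attains its maximum at an extreme point, together with the identification of the extreme points of $\setcond{\rho\in\dh}{\tr(\rho\sigma)=0}$ as the rank-one projectors onto unit vectors in $\ker(\sigma)$. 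The orthogonality bookkeeping you flag does go through: the stage-one maximizer $\psi$ satisfies $\supp{\psi}\subseteq\ker(\sigma^*)$ by construction, so $\sigma^*$ remains feasible in stage two and the objective cannot drop. The paper's version is more self-contained (nothing beyond the triangle inequality is needed and the extremal pure pair is visible in one step), while yours is more modular and generalizes mechanically to any objective that is separately convex in each argument. One point worth making explicit in your write-up is the existence of the mixed-state maximizer $(\rho^*,\sigma^*)$ you start from; it follows from continuity of $g$ and compactness of the set of orthogonal pairs in $\dh\times\dh$.
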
 
 \begin{proof}
 	The r.h.s is certainly a lower bound of the  l.h.s since the set of pure states is a subset of $\dh$. So we have to show that it is also an upper bound. To this end, for any orthogonal $\rho\perp\sigma $, there is a probability distribution  $\{p_j\}_j$ and two sets of pure states $\{\ket{\phi_j}\}_{j}$ and $\{\ket{\psi_j}_j\}$ such that $\rho-\sigma$ has a convex decomposition:
 	\[\rho-\sigma =\sum_{j}p_j(\psi_j-\phi_j).\]
 	Then by the triangle inequality, we have  
 	\[\sum_{i\in\o}|\tr(\m_i\rho-\m_i\sigma)|\leq \max_{j} \sum_{i\in\o}|\tr(\m_i(\psi_j-\phi_j))|.\]
 	Furthermore, since for any $j$, $\ket{\psi_j}\perp\ket{\phi_j}$, we can claim that 
 	\[\max_{j} \sum_{i\in\o}|\tr(\m_i(\psi_j-\phi_j))|\leq \max_{\ket{\psi}\perp\ket{\phi}}\sum_{i\in\o}|\tr(\m_i(\psi-\phi))|.\]
\hfill $\Box$
 \end{proof}
 
% With the above lemma, we are ready to demonstrate the calculation of $K^*(\{\m_i\}_{i\in\o})$ in the following.
 \begin{theorem}\label{thm:K*}
 	\[K^*(\{\m_i\}_{i\in\o})=\max_{A\subseteq \o}[\lambda_{\max}(\m_A)-\lambda_{\min}(\m_A)]\]
	where $\m_A=\sum_{i\in A}\m_i,$ and $\lambda_{\max}(\m_A)$ and $\lambda_{\min}(\m_A)$ are the maximum and minimum eigenvalues of positive semi-definite matrix $\m_A$, respectively.
	
	Furthermore, let $A^*\subseteq \o$ be an optimal solution of reaching $K^*(\{\m_i\}_{i\in\o})$, i.e.,
	  \[A^*=\arg\max_{A\subseteq \o}[\lambda_{\max}(\m_A)-\lambda_{\min}(\m_A)]\]
	  and $\ket{\psi}$ and $\ket{\phi}$  be two normalized eigenvectors corresponding to the maximum and minimum eigenvalues of $M_{A^*}$, respectively. Then we have 
      	\[2K^*(\{\m_i\}_{i\in\o})=\sum_{i\in\o}|\tr(\m_i(\psi-\phi))|,\]
      where $\psi=\ketbra{\psi}{\psi}$ and $\phi=\ketbra{\phi}{\phi}$.
 \end{theorem}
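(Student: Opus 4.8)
The plan is to start from Lemma~\ref{lem:pure_state}, which already reduces the quantity to a maximization over mutually orthogonal pure states, namely $2K^*(\{\m_i\}_{i\in\o})=\max_{\ket{\psi}\perp\ket{\phi}}\sum_{i\in\o}|\tr(\m_i(\psi-\phi))|$. First I would exploit the POVM normalization $\sum_{i\in\o}\m_i=I$: for any fixed pair $\psi,\phi$, writing $x_i=\tr(\m_i(\psi-\phi))$, we get the zero-sum constraint $\sum_{i\in\o}x_i=\tr(\psi-\phi)=0$. This is the crux. Setting $A_{\psi\phi}=\{i\in\o:x_i\geq 0\}$, the constraint forces $\sum_{i\in\o}|x_i|=2\sum_{i\in A_{\psi\phi}}x_i$, and $A_{\psi\phi}$ is exactly the subset $B\subseteq\o$ maximizing $\sum_{i\in B}x_i=\tr(\m_B(\psi-\phi))$. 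Hence for every fixed pair, $\sum_{i\in\o}|x_i|=2\max_{B\subseteq\o}\tr(\m_B(\psi-\phi))$.

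Next I would interchange the two maximizations. Combining the previous step with Lemma~\ref{lem:pure_state} gives $K^*(\{\m_i\}_{i\in\o})=\max_{B\subseteq\o}\max_{\ket{\psi}\perp\ket{\phi}}\tr(\m_B(\psi-\phi))$, so it remains to evaluate the inner problem $\max_{\ket{\psi}\perp\ket{\phi}}[\langle\psi|\m_B|\psi\rangle-\langle\phi|\m_B|\phi\rangle]$ for each fixed $B$. The upper bound $\lambda_{\max}(\m_B)-\lambda_{\min}(\m_B)$ is immediate from the Rayleigh characterization of the extreme eigenvalues, even ignoring the orthogonality constraint. For achievability I would take $\ket{\psi}$ and $\ket{\phi}$ to be unit eigenvectors of $\m_B$ for $\lambda_{\max}(\m_B)$ and $\lambda_{\min}(\m_B)$: when these eigenvalues differ the eigenvectors are automatically orthogonal since $\m_B$ is Hermitian, and when they coincide $\m_B$ is a scalar, so the difference is $0=\lambda_{\max}-\lambda_{\min}$ for any orthogonal pair. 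This yields $K^*(\{\m_i\}_{i\in\o})=\max_{A\subseteq\o}[\lambda_{\max}(\m_A)-\lambda_{\min}(\m_A)]$.

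For the ``furthermore'' part, let $A^*$ attain the outer maximum and let $\ket{\psi},\ket{\phi}$ be the top and bottom eigenvectors of $\m_{A^*}$, so that $\tr(\m_{A^*}(\psi-\phi))=\lambda_{\max}(\m_{A^*})-\lambda_{\min}(\m_{A^*})=K^*$. I would then sandwich the sign-based set $A_{\psi\phi}$ attached to this specific pair: on one hand $\tr(\m_{A_{\psi\phi}}(\psi-\phi))\geq\tr(\m_{A^*}(\psi-\phi))=K^*$ because $A_{\psi\phi}$ maximizes $\sum_{i\in B}x_i$ over all $B$; on the other hand $\tr(\m_{A_{\psi\phi}}(\psi-\phi))\leq\lambda_{\max}(\m_{A_{\psi\phi}})-\lambda_{\min}(\m_{A_{\psi\phi}})\leq K^*$ by the formula just proved. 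Equality therefore holds, and $\sum_{i\in\o}|\tr(\m_i(\psi-\phi))|=2\tr(\m_{A_{\psi\phi}}(\psi-\phi))=2K^*$, as claimed.

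I expect the main obstacle to be handling the orthogonality constraint $\ket{\psi}\perp\ket{\phi}$ cleanly in the inner eigenvalue problem, and in particular confirming that the top and bottom eigenvectors associated with $A^*$ --- rather than some other sign-determined subset --- genuinely realize the full value $2K^*$. The zero-sum trick together with the automatic orthogonality of eigenvectors for distinct eigenvalues is precisely what makes both of these points go through without extra case analysis.
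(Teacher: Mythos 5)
Your proposal is correct and follows essentially the same route as the paper's proof: both rest on Lemma~\ref{lem:pure_state}, the zero-sum identity coming from $\sum_{i}\m_i=I$ (the paper phrases it as $\m_{\o\setminus A}=I-\m_A$), the sign-determined subset for the upper bound, and the Rayleigh-quotient characterization of $\lambda_{\max},\lambda_{\min}$ with orthogonality of eigenvectors for distinct eigenvalues and the scalar-matrix degenerate case. The only difference is organizational --- you fold the two inequalities of the paper into a single equality chain via an explicit exchange of maximizations --- which does not change the substance of the argument.
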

 \begin{proof}
By Lemma~\ref{lem:pure_state}, we have to show  	
\[\max_{\ket{\psi}\perp\ket{\phi}}\sum_{i\in\o}|\tr(\m_i(\psi-\phi))|=2\max_{A\subseteq \o}[\lambda_{\max}(\m_A)-\lambda_{\min}(\m_A)].\]
First, we claim that the l.h.s is an upper bound of the r.h.s. For any subset $A\subseteq \o$, 
\begin{equation*}
\begin{aligned}
	&\max_{\ket{\psi}\perp\ket{\phi}}\sum_{i\in\o}|\tr(\m_i(\psi-\phi))|\\
	\geq& \max_{\ket{\psi}\perp\ket{\phi}}[|\tr(\m_A(\psi-\phi))|+|\tr(\m_{\o\setminus A}(\psi-\phi))|]\\
	=&2\max_{\ket{\psi}\perp\ket{\phi}}|\tr(\m_A(\psi-\phi)|\\
	=&2\max_{\ket{\psi}\perp\ket{\phi}}\bra{\psi}\m_A\ket{\psi}-\bra{\phi}\m_A\ket{\phi}\\
	=&2[\lambda_{\max}(\m_A)-\lambda_{\min}(\m_A)]. 
\end{aligned}
\end{equation*}
The above inequality results from the triangle inequality, the first equality is obtained by $\m_A=I-\m_{\o\setminus A}$, and the last equality follows the linear algebra fact: for any positive semi-definite matrix $\m$, 
\[\max_{\ket{\psi}}\bra{\psi}\m\ket{\psi}=\lambda_{\max}(\m)\qquad \min_{\ket{\phi}}\bra{\phi}\m\ket{\phi}=\lambda_{\min}(\m)\] 
and the optimal $\ket{\psi^*}$ and $\ket{\phi^*}$ are  corresponding to maximum and minimum normalized eigenvectors, respectively; furthermore,  if $\lambda_{\min}(\m)\not=\lambda_{\max}(\m)$, then $\ket{\psi^*}\perp\ket{\phi^*}$, otherwise $\m=k I$ for some constant $k>0$ and  any two orthogonal pure states in $\h$ are the optimal $\ket{\psi^*}$ and $\ket{\phi^*}$.

Next, we prove the l.h.s is also a lower bound. Let $A=\{i\in\o\mid\tr(\m_i\psi)\geq \tr(\m_i\phi)\}$. For any $\ket{\psi}\perp\ket{\phi}$, 
\begin{equation*}
\begin{aligned}
	&\sum_{i}|\tr(\m_i(\psi-\phi))|\\
	=& \tr(\m_A(\psi-\phi))-\tr(\m_{\o\setminus A}(\psi-\phi))\\
	=&2\tr(\m_A(\psi-\phi))\\
	\leq&2[\lambda_{\max}(\m_A) -\lambda_{\min}(\m_A)]\\
	\leq&2\max_{A\subseteq \o}[\lambda_{\max}(\m_A) -\lambda_{\min}(\m_A)].
\end{aligned}
\end{equation*}
The first equality comes from the definition of set $A$, the second one is obtained by $\m_A=I-\m_{\o\setminus A}$, and the first inequality is observed by the above linear algebra fact. 
\hfill $\Box$
 \end{proof}
 
 Now we are ready to prove Theorem~\ref{Thm:main} in the paper.
\begin{proof}
 This directly results from Theorem~\ref{thm:K*} for $K^*(\{\e^{\dagger}(M_i^\dagger M_i)\}_{i\in\o})$ with POVM $\{\e^{\dagger}(M_i^\dagger M_i)\}_{i\in\o}$.
 \hfill $\Box$
\end{proof}

\section{More Scalability Experiments}\label{sec:more_experiments}
Table~\ref{tab:qcnn2} summarizes the experimental
results for computing the Lipshitz constant $K^*$ on 29-qubit QCNN models by Algorithm~\ref{Algorithm:K} in the paper.  In this experiment, the time-out (TO) is set as 10 hours.

\begin{table}[ht]
\vspace{-1cm}
    \centering
    \caption{Experimental results of Lipschitz constant $K^*$ of QCNN models.}
    \label{tab:qcnn2}
    \begin{tabular}{p{36pt}<{\centering}ccp{36pt}<{\centering}p{28pt}<{\raggedleft}|p{36pt}<{\centering}p{28pt}<{\raggedleft}|p{36pt}<{\centering}p{28pt}<{\raggedleft}}
            \Xhline{1pt}
            \multirow{2}{*}{\#Qubits} & \multicolumn{2}{c}{Noise} & \multicolumn{2}{c|}{Evaluation I} & \multicolumn{2}{c|}{Evaluation II} & \multicolumn{2}{c}{Evaluation III}\\
            \cline{2-9}
            & type & probability & {$K^*$} & \multicolumn{1}{c|}{Time} & {$K^*$} & \multicolumn{1}{c|}{Time} & {$K^*$} & \multicolumn{1}{c}{Time}\\
            \Xhline{0.6pt}
            \multirow{13}{*}{$29$} & \multicolumn{2}{c}{None} & $1.0000$ & \multicolumn{1}{c}{$\backslash$} & $1.0000$ & \multicolumn{1}{c}{$\backslash$}& $1.0000$ & \multicolumn{1}{c}{$\backslash$}\\
            \cline{2-9}
            & \multirow{3}{*}{Phase flip} & $10^{-4}$ & $0.9999$ & 25.46\ttm & $0.9999$ & 17.43\ttm & $0.9999$ & 1.24\tth\\
            & & $10^{-3}$ & $0.9982$ & 28.38\ttm & $0.9983$ & 17.51\ttm & $0.9978$ & 57.38\ttm\\
            & & $10^{-2}$ & - & \multicolumn{1}{c|}{\texttt{~~TO}} & - & \multicolumn{1}{c|}{\texttt{~~TO}} & - & \multicolumn{1}{c}{\texttt{~~TO}} \\
            \cline{2-9}
            & \multirow{3}{*}{Depolarize} & $10^{-4}$ & $0.9998$ & 22.71\ttm & $0.9998$ & 45.08\ttm & $0.9997$ & 34.94\ttm \\
            & & $10^{-3}$ & $0.9985$ & 41.53\ttm & $0.9978$ & 33.45\ttm & $0.9978$ & 35.64\ttm \\
            & & $10^{-2}$ & - & \multicolumn{1}{c|}{\texttt{~~TO}} & $0.9773$ & 1.76\tth & $0.9810$ & 1.47\tth\\
            \cline{2-9}
            & \multirow{3}{*}{Bit flip} & $10^{-4}$ & $0.9998$ & 33.60\ttm & $0.9999$ & 54.64\ttm & $0.9998$ & 1.06\tth\\
            & & $10^{-3}$ & $0.9978$ & 55.11\ttm & $0.9974$ & 1.33\tth & $0.9977$ & 40.71\ttm \\
            & & $10^{-2}$ & - & \multicolumn{1}{c|}{\texttt{~~TO}} & - & \multicolumn{1}{c|}{\texttt{~~TO}} & - & \multicolumn{1}{c}{\texttt{~~TO}}\\
            \cline{2-9}
            & \multirow{3}{*}{Mixed noise} & $10^{-4}$ & $0.9998$ & 35.4\ttm & $0.9998$ & 24.10\ttm & $0.9998$ & 36.89\ttm \\
            & & $10^{-3}$ & $0.9986$ & 20.33\ttm & $0.9977$ & 26.83\ttm & $0.9980$ & 28.05\ttm \\
            & & $10^{-2}$ & $0.9778$ & 4.74\tth & - & \multicolumn{1}{c|}{\texttt{~~TO}} & $0.9851$ & 32.45\ttm\\
            \Xhline{0.6pt}
        \end{tabular}
\end{table}
\end{document}